\documentclass{llncs}

\usepackage{fullpage}
\usepackage{amssymb}
\usepackage{xspace}
\usepackage[english]{babel}
\usepackage{amsmath}
\usepackage{amsopn}
\usepackage{latexsym}
\usepackage{textcomp}
\usepackage{epsfig}
\usepackage{subfigure}
\usepackage{color}
\usepackage{times}
\usepackage{enumerate}
\usepackage{tabularx}
\usepackage[plainpages=false]{hyperref}
\usepackage[figure,table]{hypcap}
\usepackage{multirow}
\usepackage{todonotes}

\newcommand{\remove}[1]{}

% Operators
\DeclareMathOperator{\CG}{CR}

\title{Properties and Complexity of Fan-Planarity 
\thanks{Research supported in part by the MIUR project AMANDA ``Algorithmics for MAssive and Networked DAta'', prot. 2012C4E3KT\_001.}
}

\author{Carla Binucci\inst{1},
Emilio Di Giacomo\inst{1},
Walter Didimo\inst{1},
Fabrizio Montecchiani\inst{1},\\
Maurizio Patrignani\inst{2},
%Antonios Symvonis\inst{3},
Ioannis G. Tollis\inst{3}
}

\date{}

\institute{
Universit\`a degli Studi di Perugia, Italy\\
\email{\{carla.binucci,emilio.digiacomo,\\walter.didimo,fabrizio.montecchiani\}@unipg.it} \and
Universit\`a  Roma Tre, Italy\\
\email{patrigna@dia.uniroma3.it} \and
%National Technical University of Athens, Greece\\
%\email{symvonis@math.ntua.gr} \and
Univ. of Crete and Institute of Computer Science-FORTH, Greece\\
\email{tollis@ics.forth.gr}
}

\pagestyle{plain}

\begin{document}
\maketitle

\begin{abstract}
In a \emph{fan-planar drawing} of a graph an edge can cross only edges with a common end-vertex. Fan-planar drawings have been recently introduced by Kaufmann and Ueckerdt, who proved that every $n$-vertex fan-planar drawing has at most $5n-10$ edges, and that this bound is tight for $n \geq 20$. We extend their result, both from the combinatorial and the algorithmic point of view. We prove tight bounds on the density of constrained versions of fan-planar drawings and study the relationship between fan-planarity and $k$-planarity. Furthermore, we prove that deciding whether a graph admits a fan-planar drawing in the variable embedding setting is NP-complete.
\end{abstract}

%-------------
% INTRODUCTION
%-------------
\section{Introduction}\label{se:introduction}

There is a growing interest in the study of non-planar drawings of graphs with forbidden crossing configurations. The idea is to relax the planarity constraint by allowing edge crossings that do not affect too much the drawing readability. Among the most popular types of non-planar drawings studied so far we mention: \emph{$k$-planar drawings}, in which an edge can have at most $k$ crossings (see, e.g.,~\cite{DBLP:conf/gd/AlamBK13,DBLP:conf/gd/AuerBBGHNR13,DBLP:conf/gd/AuerBGH12,DBLP:conf/gd/BrandenburgEGGHR12,DBLP:journals/comgeo/DiGiacomoDLM13,DBLP:journals/ipl/Didimo13,DBLP:journals/dam/EadesL13,DBLP:conf/gd/HongEKLSS13,DBLP:conf/cocoon/HongELP12,DBLP:journals/jgt/KorzhikM13,DBLP:journals/combinatorica/PachT97,DBLP:journals/siamdm/Suzuki10}); \emph{$k$-quasi-planar drawings}, which do not contain $k$ mutually crossing edges (see, e.g.,~\cite{DBLP:journals/dcg/Ackerman09,DBLP:journals/jct/AckermanT07,DBLP:journals/combinatorica/AgarwalAPPS97,DBLP:conf/wg/DiGiacomoDLM12,DBLP:journals/siamdm/FoxPS13,DBLP:journals/dcg/Valtr98}); \emph{RAC drawings}, where edges can cross only at right angles (see, e.g.,~\cite{DBLP:journals/tcs/DidimoEL11} and~\cite{dl-cargd-12} for a survey); \emph{ACE$_\alpha$ drawings}~\cite{DBLP:journals/siamdm/AckermanFT12} and \emph{ACL$_\alpha$ drawings}~\cite{DBLP:conf/s-egc/AngeliniBDFHKLL11,DBLP:journals/mst/DiGiacomoDLM11,DBLP:journals/cjtcs/DujmovicGMW11}, which are generalizations of RAC drawings; namely, in an ACE$_\alpha$ drawing edges can cross only at an angle that is \emph{exactly} $\alpha$ ($\alpha \in (0,\pi/2]$), while in an ACL$_\alpha$ drawing edges can cross only at angles that are \emph{at least} $\alpha$ (see also~\cite{dl-cargd-12}); \emph{fan-crossing free drawings}, where there cannot be an edge that crosses two other edges with a common end-vertex~\cite{DBLP:conf/isaac/CheongHKK13}.  

Given a desired type $T$ of non-planar drawing with forbidden crossing configurations, a classical combinatorial problem is to establish bounds on the maximum number of edges that a drawing of type $T$ can have; this problem is usually dubbed as a Tur{\'a}n-type problem, and several tight bounds have been proved for the types of drawings mentioned above, both for straight-line and for polyline edges (see, e.g.,~\cite{DBLP:journals/dcg/Ackerman09,DBLP:journals/siamdm/AckermanFT12,DBLP:journals/combinatorica/AgarwalAPPS97,DBLP:conf/gd/BrandenburgEGGHR12,DBLP:conf/isaac/CheongHKK13,DBLP:journals/ipl/Didimo13,DBLP:journals/tcs/DidimoEL11,DBLP:journals/cjtcs/DujmovicGMW11,DBLP:journals/siamdm/FoxPS13,DBLP:journals/combinatorica/PachT97,DBLP:journals/dcg/Valtr98}). From the algorithmic point of view, the complexity of testing whether a graph $G$ admits a drawing of type $T$ is one of the most interesting. Also for this problem several results have been shown, both in the variable and in the fixed embedding setting (see, e.g.,~\cite{DBLP:conf/gd/AuerBBGHNR13,DBLP:journals/ijcga/DehkordiE12,DBLP:journals/algorithmica/DiGiacomoDEL14,DBLP:journals/algorithmica/GrigorievB07,DBLP:conf/gd/HongEKLSS13,DBLP:journals/jgt/KorzhikM13}).

\smallskip
In this paper we investigate \emph{fan-planar drawings} of graphs, in which an edge cannot cross two independent edges, i.e., an edge can cross several edges provided that they have a common end-vertex. Fan-planar drawings have been recently introduced by Kaufmann and Ueckerdt~\cite{DBLP:journals/corr/KaufmannU14}; they proved that every $n$-vertex graph without loops and multiple edges that admits a fan-planar drawing has at most $5n-10$ edges, and that this bound is tight for $n \geq 20$. Fan-planar drawings are on the opposite side of fan-crossing free drawings mentioned above.
Besides their intrinsic theoretical interest, we find that fan-planar drawings can be used as a basis for generating drawings with few edge crossings in a confluent drawing style (see, e.g.,~\cite{DBLP:journals/jgaa/DickersonEGM05,DBLP:journals/algorithmica/EppsteinGM07}). For example, Figure~\ref{fi:intro-a} shows a fan-planar drawing $\Gamma$ with 12 crossings; Figure~\ref{fi:intro-b} shows a new drawing with just 3 crossings obtained from $\Gamma$ by bundling crossing ``fans''.

\begin{figure}
\centering
\subfigure[]{\label{fi:intro-a}\includegraphics[scale=0.5]{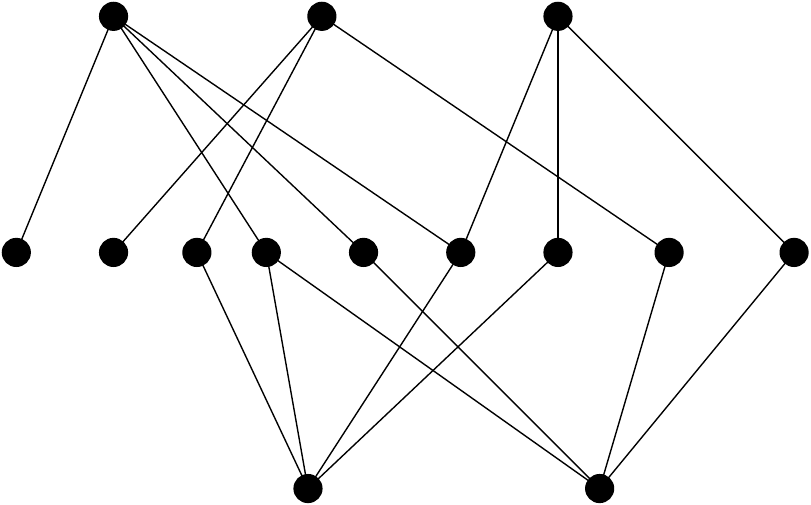}}
\hspace{1cm}
\subfigure[]{\label{fi:intro-b}\includegraphics[scale=0.5]{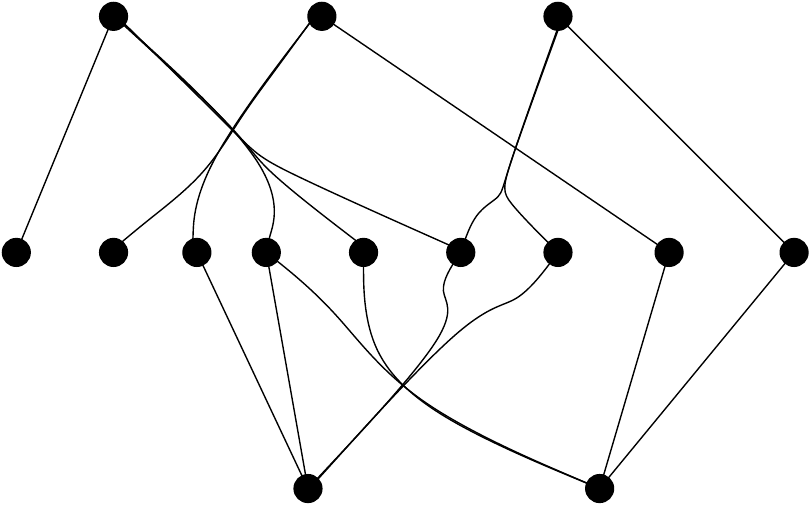}}
\caption{(a) A fan-planar drawing with 12 crossings; (b) A confluent drawing with 3 crossings.}
\end{figure}

We prove both combinatorial properties and complexity results related to fan-planar drawings of graphs. The main contributions of our work are as follows:

\smallskip\noindent{$(i)$} We study the density of constrained versions of fan-planar drawings, namely \emph{outer fan-planar drawings}, where all vertices must lie on the external boundary of the drawing (Section~\ref{se:outerfan}), and \emph{$2$-layer fan-planar drawings}, where vertices are placed on two distinct horizontal lines and edges must be vertically monotone lines. We prove tight bounds for the edge density of these drawings. More precisely, we show that $n$-vertex outer fan-planar drawings have at most $3n-5$ edges and that this bound is tight for $n \geq 5$. Also, we prove that $n$-vertex $2$-layer fan-planar drawings have at most $2n-4$ edges and that this bound is also tight for $n \geq 3$. We remark that outer and $2$-layer non-planar drawings have been previously studied   in the 1-planarity setting~\cite{DBLP:conf/gd/AuerBBGHNR13,DBLP:journals/ipl/Didimo13,DBLP:conf/gd/HongEKLSS13} and in the RAC planarity setting~\cite{DBLP:journals/ijcga/DehkordiE12,DBLP:journals/algorithmica/DiGiacomoDEL14}. 

\smallskip\noindent{$(ii)$} Since general fan-planar drawings have at most $5n-10$ edges and the same bound holds for $2$-planar drawable graphs~\cite{DBLP:journals/combinatorica/PachT97}, we investigate the relationship between these two graph classes. More in general, we are able to prove that in fact for any $k \geq 2$ there exist graphs that are fan-planar drawable but not $k$-planar drawable, and vice-versa (Section~\ref{se:2planar}).

\smallskip\noindent{$(iii)$} Finally, exploiting some ingredients used for proving the previously mentioned results, we show that testing whether a graph admits a fan-planar drawing in the variable embedding setting is NP-complete (Section~\ref{se:hardness}).

\medskip 
Preliminary definitions are given in Section~\ref{se:preliminaries}. Conclusions and open problems can be found in Section~\ref{se:conclusions}.    
 
%--------------
% PRELIMINARIES
%--------------
\section{Preliminary definitions and results}\label{se:preliminaries}
%We assume familiarity with basic concepts of graph drawing and planarity (see, e.g., \cite{dett-gd-99}).
A \emph{drawing} $\Gamma$ of a graph $G$ maps each vertex to a distinct point of the plane and each edge to a simple Jordan arc between the points corresponding to the end-vertices of the edge. For a subgraph $G'$ of $G$, we denote by $\Gamma[G']$ the restriction of $\Gamma$ to $G'$. Throughout the paper we consider only \emph{simple graphs}, i.e., graphs with neither multiple edges nor self-loops; also we only consider \emph{simple drawings}, i.e., drawings such that the arcs representing two edges have at most one point in common, which is either a common end-vertex or a common interior point where the two arcs properly cross each other.

For each vertex $v$ of $G$, the set of edges incident to $v$ is called the \emph{fan of $v$}. Clearly, each edge $(u,v)$ of $G$ belongs to the fan of $u$ and to the fan of $v$ at the same time. Two edges that do not share a vertex are called \emph{independent edges}; two independent edges always belong to distinct fans.  
A \emph{fan-planar drawing} $\Gamma$ of $G$, is a drawing of $G$ such that: $(a)$ no edge is crossed by two independent edges; $(b)$ there are not two adjacent edges $(u,v)$, $(u,w)$ that cross an edge $e$ from different 
``sides'' while moving from $u$ to $v$ and from $u$ to $w$. The forbidden configurations $(a)$ and $(b)$ are depicted in Figure~\ref{fi:forbidden-conf-1} and Figure~\ref{fi:forbidden-conf-2}, respectively.
Figures~\ref{fi:allowed-conf-1} and~\ref{fi:allowed-conf-2} show two allowed configurations of a fan-planar drawing. A \emph{fan-planar graph} is a graph that admits a fan-planar drawing. 

\begin{figure}
\centering
\subfigure[]{\label{fi:forbidden-conf-1}\includegraphics[scale=0.5]{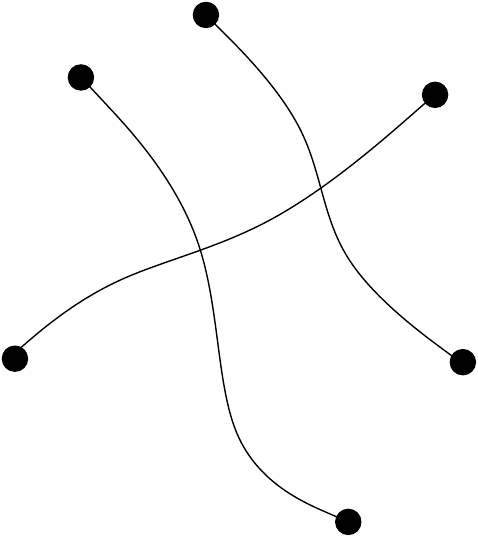}}
\hspace{1cm}
\subfigure[]{\label{fi:forbidden-conf-2}\includegraphics[scale=0.5]{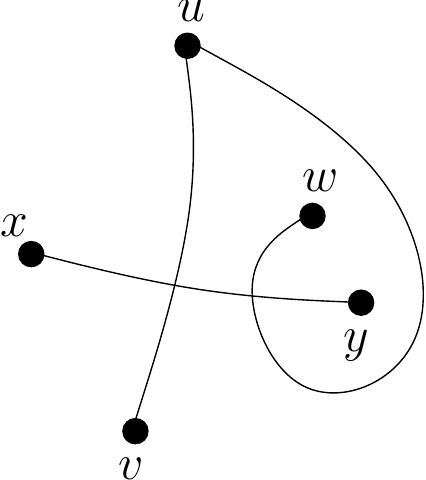}}
\hspace{1cm}
\subfigure[]{\label{fi:allowed-conf-1}\includegraphics[scale=0.5]{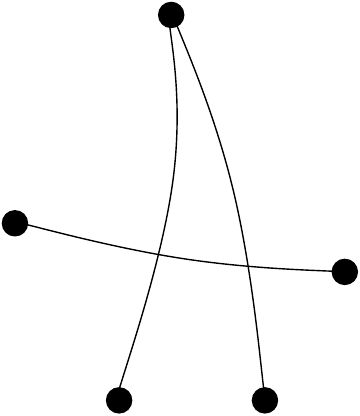}}
\hspace{1cm}
\subfigure[]{\label{fi:allowed-conf-2}\includegraphics[scale=0.5]{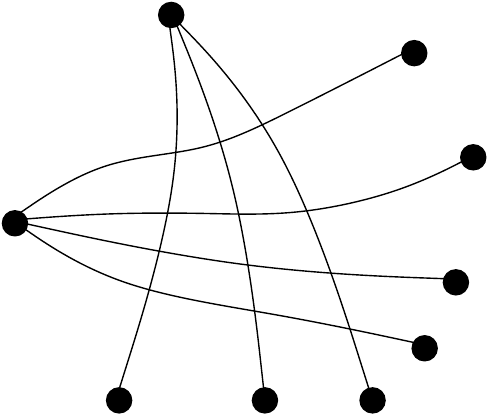}}
\caption{(a)-(b) Forbidden configurations in a fan-planar drawing: (b) Allowed configurations in a fan-planar drawing.}
\end{figure}

The following property is an immediate consequence of the definition of fan-planar drawings.

\begin{property}\label{pr:3-mutually}
A fan-planar drawing does not contain 3-mutually crossing edges.
\end{property}

Let $\Gamma$ be a non-planar drawing of $G$; the \emph{planar enhancement} $\Gamma'$ of $\Gamma$ is the drawing obtained from $\Gamma$ by replacing each crossing point with a dummy vertex. The boundary of each face $f'$ of $\Gamma'$ consists of a sequence of real and dummy vertices; the connected region $f$ of the plane that corresponds to $f'$ in $\Gamma$ consists of a sequence of vertices and crossing points. For simplicity we call $f$ a \emph{face} of $\Gamma$. The \emph{outer face} of $\Gamma$ is the face corresponding to the outer face of $\Gamma'$. A fan-planar drawing of $G$ with all vertices on the outer face is called an \emph{outer fan-planar drawing} of $G$. Observe that the configuration in Figure~\ref{fi:forbidden-conf-2} cannot occur in a drawing with all vertices on the outer face; hence, a drawing is outer fan-planar if and only if all vertices are on the outer face and it does not contain an edge crossed by two independent edges. An \emph{outer fan-planar graph} is a graph that admits an outer fan-planar drawing.

An outer fan-planar graph $G$ is \emph{maximal}, if no edge can be added to $G$ without loosing the property that $G$ remains outer fan-planar. An outer fan-planar graph $G$ with $n$ vertices is \emph{maximally dense} if it has the maximum number of edges among all outer fan-planar graphs with $n$ vertices. Notice that if $G$ is maximally dense then it is also maximal, but not vice-versa. The following property holds:

\begin{lemma}\label{le:outer-edges-maximal-outer-fan-planar}
Let $G=(V,E)$ be a maximal outer fan-planar graph and let $\Gamma$ be an outer fan-planar drawing of $G$.
The outer face of $\Gamma$ does not contain crossing points, i.e., it consists of $|V|$ uncrossed edges.   
\end{lemma}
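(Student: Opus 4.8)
The plan is to argue by contradiction: suppose $\Gamma$ is an outer fan-planar drawing of a maximal outer fan-planar graph $G$ whose outer face contains a crossing point, and derive a contradiction by exhibiting an edge that can be added while preserving outer fan-planarity. First I would fix the cyclic order $v_0, v_1, \dots, v_{n-1}$ in which the vertices of $V$ appear along the outer boundary of $\Gamma$; since all vertices lie on the outer face, this order is well defined. If the outer face contains a crossing point $p$, then on the portion of the outer boundary ``cut off'' by $p$ there are two consecutive outer vertices $u$ and $w$ such that the boundary walk from $u$ to $w$ along the outer face passes through $p$ (and possibly through further crossing points and vertices), but such that $u$ and $w$ are not joined by an uncrossed edge lying on the outer face. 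The natural candidate to add is the edge $(u,w)$, drawn as a curve running along the outer face just inside the boundary walk between $u$ and $w$; I would choose $u,w$ to be a minimal such pair, so that the region between the curve and the boundary walk is as simple as possible.

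Next I would verify that adding $(u,w)$ keeps the drawing outer fan-planar. The new curve can be drawn arbitrarily close to the outer boundary, so it stays on the outer face and all vertices remain on the outer face. It remains to check that $(u,w)$ is not crossed by two independent edges (recall that, by the remark preceding the lemma, for drawings with all vertices on the outer face this is the only forbidden configuration). The edges crossed by the new curve are exactly those edges of $G$ that have an endpoint in the interior of the boundary walk from $u$ to $w$ and cross out of it; the key point is that each such edge, being ``cut off'' near the boundary, must emanate from one of the few vertices strictly between $u$ and $w$ in the cyclic order, and by minimality of the pair $(u,w)$ there is essentially only one such vertex (or none, in which case $(u,w)$ is uncrossed and we are immediately done). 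Hence all edges crossed by $(u,w)$ share that common endpoint, so no two of them are independent, and fan-planarity is preserved. This contradicts maximality of $G$, unless $G$ already had the edge $(u,w)$ on its outer boundary as an uncrossed edge, contrary to our assumption. Therefore the outer face contains no crossing point, and since it is then bounded by a simple cycle through all $|V|$ vertices, it consists of $|V|$ uncrossed edges.

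The main obstacle is making precise the combinatorial structure of the outer boundary walk near a crossing point $p$ and showing that a \emph{minimal} bad pair $(u,w)$ isolates a single interior vertex whose fan accounts for all crossings with the new edge. Care is needed because the boundary walk between $u$ and $w$ may itself meander through several crossing points, and one must rule out that the curve for $(u,w)$ is forced to cross independent edges belonging to different fans; this is where the minimal-choice of the pair and Property~\ref{pr:3-mutually} (to bound how edges can pile up along the boundary) do the real work. A secondary subtlety is the degenerate case $n \le 2$ and the case where $G$ is disconnected or has a cut vertex, which I would handle separately or rule out by a preliminary observation that a maximal outer fan-planar graph on $n \ge 3$ vertices must be $2$-connected.
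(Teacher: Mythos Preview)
Your overall strategy---assume a crossing point lies on the outer face, locate two nearby vertices, and add an edge between them to contradict maximality---is exactly the paper's approach. But the paper's execution is a one-liner that you are missing: take $u$ and $v$ to be the first vertices reached from the crossing point $c$ in each direction along the outer boundary, and route the new edge $(u,v)$ \emph{through the outer face itself}, i.e., through the unbounded region. Since the outer face is an open connected region with $u$ and $v$ on its boundary, such an arc exists and crosses no edge of $\Gamma$ whatsoever; all vertices stay on the outer face (there are no vertices strictly between $u$ and $v$ on the boundary arc through $c$, by the choice of $u,v$), and maximality is contradicted immediately.

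All of your apparatus---the minimal-pair choice, the analysis of which edges the new curve crosses, the appeal to Property~\ref{pr:3-mutually}, and the $2$-connectivity/disconnected preliminaries---is therefore unnecessary. It is also inconsistent as written: if $u$ and $w$ are \emph{consecutive} in the cyclic order of vertices on the outer boundary, then there is no vertex strictly between them on that boundary walk, so your claim that ``there is essentially only one such vertex'' whose fan absorbs all crossings has nothing to point to. And if you genuinely route the curve on the bounded side of the boundary walk, then near a boundary crossing of edges $(u,a)$ and $(b,w)$ the curve is forced to cross the continuation of one of these past the crossing point; that edge is adjacent to the new edge $(u,w)$, so you violate simplicity of the drawing before fan-planarity even enters. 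Routing on the unbounded side dissolves all of these issues at once.
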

\begin{proof}
Suppose by contradiction that the outer face of $\Gamma$ contains a crossing point $c$. Let $u$ be the first vertex of $G$ encountered while moving from $c$ counterclockwise along the boundary of the outer face of $\Gamma$ and let $v$ be the first vertex of $G$ encountered while moving from $c$ clockwise along the boundary of the outer face of $\Gamma$. One can add to $\Gamma$ a simple curve connecting $u$ and $v$ without crossing any other edge of $\Gamma$ and so that $u$ and $v$ remain on the outer face, thus contradicting the hypothesis that $G$ is maximal outer fan-planar.\qed     
\end{proof}

Given an outer fan-planar drawing $\Gamma$ of a maximal outer fan-planar graph $G$, the edges of $G$ that form the boundary of the outer face of $\Gamma$ will be also called the \emph{outer edges} of $\Gamma$.
 
\smallskip
A \emph{$2$-layer fan-planar drawing} is a fan-planar drawing such that: $(i)$ each vertex is drawn on one of two distinct horizontal lines, called \emph{layers}; $(ii)$ each edge connects vertices of different layers and it is drawn as a vertical monotone curve. By definition, a $2$-layer fan-planar drawing is also an outer fan-planar drawing. A \emph{$2$-layer fan-planar graph} is a graph that admits a $2$-layer fan planar drawing.

\section{Density of Outer and $2$-layer Fan-planar Graphs}\label{se:outerfan}

We first prove that an $n$-vertex outer fan-planar graph $G$ has at most $3n-5$ edges. Then we describe a family of outer fan-planar graphs with $n$ vertices and exactly $3n-5$ edges. These two results together give us a tight bound on the density of outer fan-planar graphs.

Let $G$ be a graph and let $\Gamma$ be a drawing of $G$. The \emph{crossing graph} of $\Gamma$, denoted as $\CG(\Gamma)$, is a graph having a vertex for each edge of $G$ and an edge between any two vertices whose corresponding edges cross in $\Gamma$. A cycle of $\CG(\Gamma)$ of odd length will be called an \emph{odd cycle} of $\CG(\Gamma)$; similarly, an \emph{even cycle} of $\CG(\Gamma)$ is a cycle of even length.  
In order to prove the $3n-5$ upper bound, we can assume that $G$ is a maximally dense outer fan-planar graph. We start by proving some interesting combinatorial properties of $G$ related to the cycles of the crossing graph of $G$. 

\begin{lemma}\label{le:3n-6}
Let $G=(V,E)$ be a maximal outer fan-planar graph with $n = |V|$ vertices and $m = |E|$ edges. Let $\Gamma$ be an outer fan-planar drawing of $G$. If $\CG(\Gamma)$ does not have odd cycles then $m \leq 3n-6$.
\end{lemma}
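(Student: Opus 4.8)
The plan is to exploit the hypothesis that $\CG(\Gamma)$ has no odd cycles, i.e., that $\CG(\Gamma)$ is bipartite, in order to two-color the edges of $G$ so that crossings only occur between edges of different colors. Write $E = E_1 \cup E_2$ for this partition, where $E_i$ are the edges in color class $i$, and let $G_i = (V, E_i)$. Since no two edges of the same color cross, each $G_i$ is drawn planar by $\Gamma$; moreover all vertices lie on the outer face of $\Gamma$, so each $\Gamma[G_i]$ is an outerplanar drawing of $G_i$. The crude bound $|E_i| \le 2n-3$ for outerplanar graphs would only give $4n-6$, so this alone is not enough and the argument must be sharpened.

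The refinement I would pursue is to look at the outer boundary. By Lemma~\ref{le:outer-edges-maximal-outer-fan-planar}, since $G$ is maximal outer fan-planar, the outer face of $\Gamma$ consists of $n$ uncrossed edges forming a Hamiltonian cycle $C$ on $V$; call these the outer edges. Each outer edge, being uncrossed, may be assigned to either color class without affecting bipartiteness of the relevant crossing relation, so we may assume WLOG that all $n$ outer edges lie in $E_1$. Now $G_2$ contains only chords of $C$, and since $\Gamma[G_2]$ is outerplanar with a fixed Hamiltonian ordering of the vertices given by $C$, the chords in $E_2$ form a non-crossing set of chords of the convex polygon $C$; there are at most $n-3$ such chords. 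For $G_1$: it contains the $n$ outer edges plus some chords, and $\Gamma[G_1]$ is outerplanar, so $|E_1| \le 2n-3$. Adding up gives $m = |E_1| + |E_2| \le (2n-3) + (n-3) = 3n-6$, which is the desired bound.

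The main obstacle — and the step that needs the most care — is the claim that the outer edges can all be pushed into a single color class without destroying the property ``same-color edges do not cross.'' Since outer edges are uncrossed, they are isolated vertices in $\CG(\Gamma)$, so recoloring them is harmless for the crossing constraint; but I should double-check that after this recoloring the two classes $E_1, E_2$ still induce outerplanar (equivalently, crossing-free) drawings — which they do, precisely because $E_1$ now consists of the outer cycle together with a crossing-free chord set, and $E_2$ consists of a crossing-free chord set. A secondary point to verify is that the chords of $E_2$ genuinely form a non-crossing chord system of the convex polygon determined by $C$: this follows because in an outer fan-planar drawing the cyclic order of the vertices on the outer face is fixed by $C$, and two chords of a convex polygon are non-crossing in the drawing iff they are non-crossing combinatorially, so $|E_2| \le n-3$ (with equality only for a triangulation of the polygon minus its boundary). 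If one wants the cleaner split $|E_1| \le 2n - 3$ and $|E_2| \le n - 3$, note the asymmetry is essential: we cannot in general assume both classes contain the outer cycle, so the bound is not symmetric, and the total $3n - 6$ comes out exactly. Finally, if $\CG(\Gamma)$ is bipartite but $G$ is not assumed maximal, one first completes $G$ to a maximal outer fan-planar graph — this only adds edges, so proving the bound for maximal graphs suffices; but here the statement already assumes $G$ maximal, so no completion step is needed.
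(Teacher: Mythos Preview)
Your proof is correct and follows essentially the same approach as the paper: bipartition the edges using bipartiteness of $\CG(\Gamma)$, use Lemma~\ref{le:outer-edges-maximal-outer-fan-planar} to place all $n$ uncrossed outer edges into $E_1$, and then bound $|E_1|\le 2n-3$ by outerplanarity and $|E_2|\le n-3$ since $E_2$ contains none of the outer edges. The only cosmetic difference is that you phrase the bound on $|E_2|$ as ``at most $n-3$ non-crossing chords of a polygon'' whereas the paper writes it as $2n-3-n$; these are the same observation.
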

\begin{proof}
If $\CG(\Gamma)$ does not contain odd cycles, then it is bipartite and its vertices can be partitioned into two independent sets $W_1$ and $W_2$. Since by Lemma~\ref{le:outer-edges-maximal-outer-fan-planar} the outer edges of $\Gamma$ are not crossed, they correspond to $n$ isolated vertices in $\CG(\Gamma)$. We can arbitrarily assign all these vertices to the same set, say $W_1$. Denote by $E_i$ the set of edges of $G$ corresponding to the vertices of $W_i$ ($i \in \{1,2\}$). Clearly, $E_1$ and $E_2$  
partition the set $E$. Since no two edges of $E_i$ cross in $\Gamma$, then the two subgraphs $G_1 = (V,E_1)$ and $G_2=(V,E_2)$ are outerplanar graphs, where $|E_1| \leq 2n-3$ and $|E_2| \leq 2n-3-n$.
Thus, $m = |E| = |E_1| + |E_2| \leq 3n - 6$.\qed
\end{proof}

The next lemma shows that the length of any odd cycle of $\CG(\Gamma)$ is at most $5$. 

\begin{lemma}\label{le:oddcycles}
Let $G$ be a maximally dense outer fan-planar graph with $n$ vertices and let $\Gamma$ be an outer fan-planar drawing of $G$. $\CG(\Gamma)$ does not contain odd cycles of length greater than $5$.
\end{lemma}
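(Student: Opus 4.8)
The plan is to argue by contradiction: assume $\CG(\Gamma)$ contains a cycle $e_1,e_2,\dots,e_\ell$ with $\ell\ge 7$ odd (indices modulo $\ell$, so $e_i$ crosses $e_{i+1}$), and derive that $G$ is not maximally dense. The first step distills a rigid combinatorial skeleton out of this cycle using only the forbidden configuration $(a)$. For every $i$, both $e_{i-1}$ and $e_{i+1}$ cross $e_i$, so they cannot be independent and therefore share a vertex, which we call $c_i$. Reading this off at $e_{j-1}$ and at $e_{j+1}$ shows that $c_{j-1}$ and $c_{j+1}$ are the two endpoints of $e_j$, i.e.\ $e_j=(c_{j-1},c_{j+1})$, unless $c_{j-1}=c_{j+1}$. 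Since $\ell$ is odd, the shift $i\mapsto i+2$ acts transitively on the indices, so a short case analysis using simplicity of $\Gamma$ (adjacent edges cannot cross) together with Property~\ref{pr:3-mutually} (no three mutually crossing edges) rules out the coincidences $c_{j-1}=c_{j+1}$ and $e_i=e_j$ for $i\neq j$. With the $c_i$ pairwise distinct, one also checks that no $e_i$ can be crossed by a cycle edge other than $e_{i-1}$ and $e_{i+1}$ (such an edge would have to share a vertex with both $e_{i-1}$ and $e_{i+1}$, and distinctness of the $c_i$ forces it to be $e_{i-1}$ or $e_{i+1}$), so the cycle is induced in $\CG(\Gamma)$.

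Next I would exploit that $\Gamma$ is an outer drawing: with all vertices on the outer boundary, each edge behaves as a chord and two edges cross iff their endpoints alternate along the boundary. From ``$e_j$ crosses $e_{j+1}$'' the four vertices $c_{j-1},c_j,c_{j+1},c_{j+2}$ must appear in exactly the cyclic order $c_{j-1},c_j,c_{j+1},c_{j+2}$ along the boundary; these overlapping constraints glue consistently around the cycle (the closing-up is automatic since $\ell$ is odd) and force $c_1,c_2,\dots,c_\ell$ to occur in this cyclic order on the outer boundary. Hence the $e_j$ are exactly the ``ear'' diagonals of the convex polygon $c_1c_2\cdots c_\ell$ (each cutting off a single corner $c_j$), and one verifies — once more from configuration $(a)$ — that no chord of ``distance'' at least $3$ among the $c_i$ can occur in $\Gamma$, since it would be crossed by two independent ear diagonals. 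Thus this gadget is already saturated on its own vertex set $\{c_1,\dots,c_\ell\}$ while carrying at most $2\ell$ edges there.

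The crux — and what I expect to be the real obstacle — is turning this saturated but too sparse ``all-ears'' gadget into a contradiction with maximal density. Here one has to use that a maximally dense outer fan-planar graph actually attains the extremal edge count, which is at least that of the explicit $3n-5$ construction: this abundance of edges forces many edges near the gadget, and one must show this is incompatible with the rigid gadget, either by producing a new forbidden crossing configuration or a shorter odd cycle in $\CG(\Gamma)$, or by explicitly exhibiting an outer fan-planar drawing on the same $n$ vertices with strictly more edges (for instance by rerouting or replacing the gadget on $\{c_1,\dots,c_\ell\}$ while leaving every other edge of $\Gamma$ untouched). The threshold being exactly $5$ is explained by the observation that for $\ell=5$ the ears together with their endpoints form $K_5$, which is maximally dense ($\binom 52=3\cdot 5-5$) and hence causes no contradiction, whereas for $\ell\ge 7$ the at most $2\ell$ edges the gadget carries on its $\ell$ vertices fall strictly short of the $3\ell-5$ an outer fan-planar graph on those vertices could support, and the gadget leaves a central face bounded only by $\ell\ge 7$ crossing points. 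Making this shortfall ``transport'' to a genuine contradiction in the full graph — in particular controlling how the edges incident to the $c_i$ cross the gadget — is the delicate part of the argument.
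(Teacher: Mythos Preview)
Your structural analysis is correct and matches the paper's: the odd cycle forces the $\ell$ vertices $c_1,\dots,c_\ell$ to appear in this cyclic order on the outer boundary with $e_j=(c_{j-1},c_{j+1})$, and any edge crossing $e_j$ must lie in the fan of $c_j$. But you explicitly stop short of the contradiction, and the route you sketch toward it---comparing the $2\ell$ edges on $\{c_1,\dots,c_\ell\}$ against a putative $3\ell-5$ and hoping this local shortfall ``transports'' globally---is not how the paper closes the gap, nor is it clear it can be made to work: maximal density is a statement about the whole graph, not about induced pieces.

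The paper's finishing move is a single local edge swap, and it is much simpler than what you anticipate. First observe (as you essentially did) that $c_i$ is \emph{not} adjacent in $G$ to any vertex of the boundary arc $\overline V$ from $c_{i+3}$ to $c_{i-3}$: such an edge would be crossed by the two independent ear diagonals $e_{i-1}$ and $e_{j-1}$. Now delete the single edge $e_{i-1}=(c_{i-1},c_{i+1})$. This opens the ear at $c_i$ into the central region, and one can route a new edge $(c_i,c_j)$ for every $c_j\in\overline V$ so that it crosses only the one ear diagonal $e_{j-1}$ on its way into $c_j$'s ear. Fan-planarity is preserved because $e_{j-1}$ was already crossed only by edges in the fan of $c_j$, and the new edge $(c_i,c_j)$ is in that fan too. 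Since $|\overline V|\ge \ell-5\ge 2$, we have removed one edge and added at least two, contradicting maximal density. That is the whole argument; no global density comparison is needed.
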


\begin{proof}
Let $C$ be an odd cycle of length $\ell$ in $\CG(\Gamma)$. Let $E(C) = \{e_0=(u_0,v_0) \dots,e_{\ell-1}=(u_{\ell-1},v_{\ell-1})\}$ be the set of $\ell$ edges of $G$ corresponding to the vertices of $C$, such that 
$e_i$ crosses $e_{i+1}$ for $i=0,\dots,l-1$, where indices are taken modulo $\ell$.

Recall that all vertices of $G$ are on the outer face of $\Gamma$, which implies that the end-vertices of the edges in $E(C)$ are encountered in the following order when walking clockwise on the boundary of the outer face of $\Gamma$: $u_{i}$ precedes $v_{i-1}$ and $v_{i}$ precedes $u_{i+2}$ (see, e.g., Figure~\ref{fi:cycle-1}). Furthermore, vertices $v_{i}$ and $u_{i+2}$ must coincide, for $i=0,\dots,\ell-1$. Indeed, if $v_{i}$ and $u_{i+2}$ are distinct, for some $i=0,\dots,\ell-1$, then edge $e_{i+1}$ is crossed by two independent edges (i.e., $e_{i}$ and $e_{i+2}$), which contradicts the hypothesis that $\Gamma$ is fan-planar. See also Figure~\ref{fi:cycle-1}. Thus, we have that $u_{i}$ precedes $u_{i+1}$ while walking clockwise on the boundary of the outer face of $\Gamma$, for $i=0,\dots,\ell-1$, as shown in Figure~\ref{fi:cycle-2}. Moreover, it can be seen that the edges in $E(C)$ are not crossed by any edge not in $E(C)$, as otherwise the drawing would not be fan-planar. 

\begin{figure}
\centering
\subfigure[]{\includegraphics[width=0.35\columnwidth]{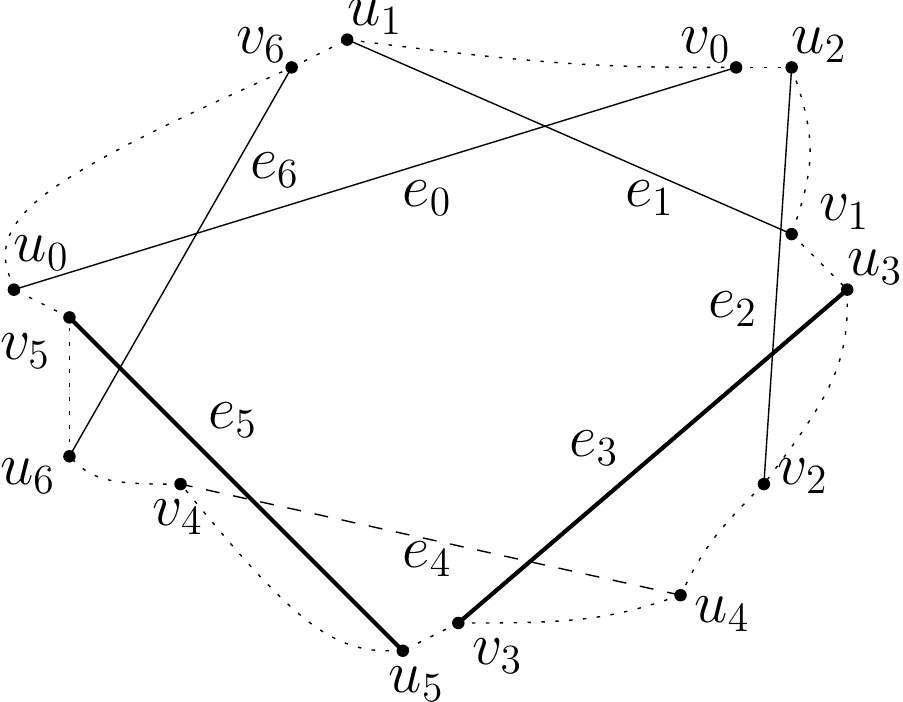}\label{fi:cycle-1}}
\hspace{2 cm}
\subfigure[]{\includegraphics[width=0.35\columnwidth]{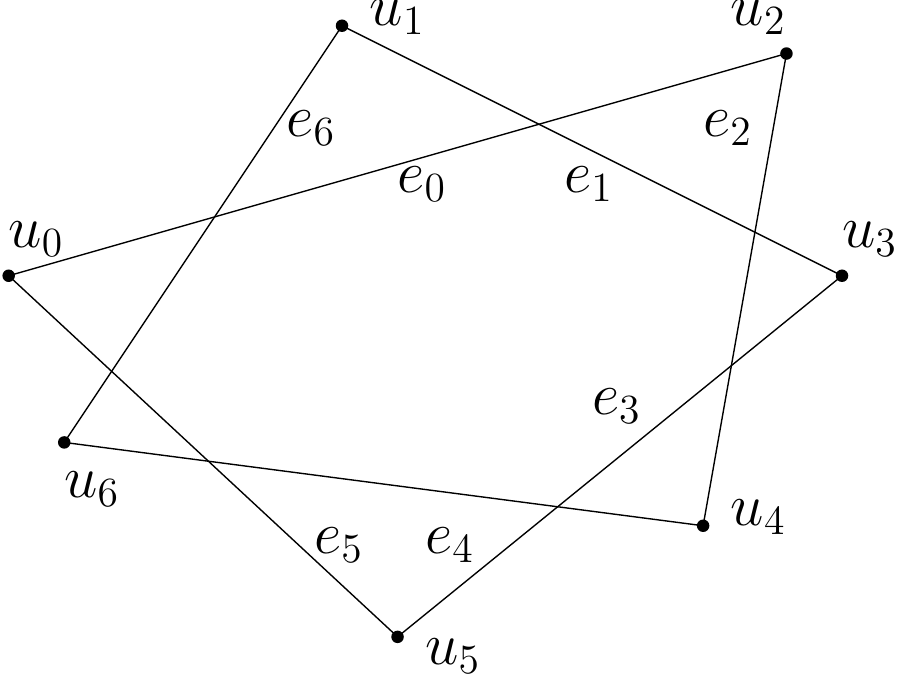}\label{fi:cycle-2}}
\caption{Illustration for the proof of Lemma~\ref{le:oddcycles}. (a) A set of edges $E(C)$ where $\ell=7$. It is possible to see that the dashed edge $e_4$ is crossed by the two bold edges $e_3$ and $e_5$, which are independent edges if $v_3$ and $u_5$ do not coincide. (b) A set of edges $E(C)$ where $\ell=7$, and where $v_{i}$ and $u_{i+2}$ always coincide, for $i=0,\dots,7$.}\label{fi:cycle}
\end{figure}

Now, suppose by contradiction that $\ell$ is odd and greater than $5$ (refer to Figure~\ref{fi:cycle-2} for an illustration). Consider a vertex $u_i$, for some $i=0,\dots,\ell-1$, and denote by $\overline{V}$ the set of vertices encountered between $u_{i+3}$ and $u_{i-3}$ while walking clockwise on the boundary of the outer face of $\Gamma$ (including $u_{i+3}$ and $u_{i-3}$). Vertex $u_i$ cannot be adjacent to any vertex in $\overline{V}$. Namely, if an edge $e=(u_i,u_j)$ existed, for some $u_j \in \overline{V}$, then it would be crossed by the two independent edges $e_{i-1}$ and $e_{j-1}$. Thus, removing $e_{i-1}$ from $\Gamma$, one can suitably connect $u_i$ to all the vertices in $\overline{V}$, still obtaining a fan-planar drawing $\Gamma^*$ with $n$ vertices. Since the size of $\overline{V}$ is $\ell-5$, and since $\ell \geq 7$ by assumption, we have that $\Gamma^*$ has at least two edges more than $\Gamma$, which contradicts the hypothesis that $G$ is maximally dense.\qed
\end{proof}

The following corollary is a consequence of Lemma~\ref{le:oddcycles} and Property~\ref{pr:3-mutually}.

\begin{corollary}\label{co:5-cycles}
Let $G$ be a maximally dense outer fan-planar graph. Any odd cycle in the crossing graph of a fan-planar drawing of $G$ has exactly length $5$. 
\end{corollary}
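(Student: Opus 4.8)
The plan is to combine directly the two ingredients cited in the statement, namely Lemma~\ref{le:oddcycles} and Property~\ref{pr:3-mutually}. Let $\Gamma$ be a fan-planar drawing of the maximally dense outer fan-planar graph $G$, and let $C$ be an odd cycle of $\CG(\Gamma)$, of length $\ell$. By Lemma~\ref{le:oddcycles} we already know $\ell \leq 5$, so it remains only to pin down which odd values below $6$ are actually attainable.

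First I would observe that $\ell \geq 3$. Since the paper only considers simple drawings, no edge crosses itself and two edges cross in at most one point; hence $\CG(\Gamma)$ has neither self-loops nor multiple edges, so it is a simple graph and every cycle in it has length at least $3$. Combined with $\ell$ odd and $\ell \leq 5$, this leaves only the two cases $\ell = 3$ and $\ell = 5$.

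Next I would eliminate the case $\ell = 3$. A cycle of length $3$ in $\CG(\Gamma)$ corresponds to three edges $e_0, e_1, e_2$ of $G$ that pairwise cross in $\Gamma$, i.e., to three mutually crossing edges; but such a configuration is forbidden by Property~\ref{pr:3-mutually}. Hence $\ell \neq 3$, and therefore $\ell = 5$, as claimed.

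I do not expect any real obstacle here: the substantive work has already been carried out in Lemma~\ref{le:oddcycles}, and the corollary is essentially a bookkeeping step, the only mildly non-trivial point being the (immediate) remark that cycles in the crossing graph have length at least $3$ because the drawing is simple.
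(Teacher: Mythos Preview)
Your proposal is correct and follows exactly the route the paper intends: the paper does not give a detailed argument for the corollary, merely stating that it is a consequence of Lemma~\ref{le:oddcycles} and Property~\ref{pr:3-mutually}, and you have filled in precisely those details. The only remark worth making is that Lemma~\ref{le:oddcycles} is stated for an \emph{outer} fan-planar drawing $\Gamma$, so strictly speaking your first sentence should take $\Gamma$ to be outer fan-planar; this is how the corollary is actually used later in the paper, and the looser phrasing in the corollary's statement is a minor imprecision of the paper rather than of your argument.
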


The following lemma proves that odd cycles in the crossing graph always correspond to $K_5$ subgraphs in $G$.

\begin{lemma}\label{le:k5}
Let $G$ be a maximally dense outer fan-planar graph, and let $\Gamma$ be an outer fan-planar drawing of $G$. If $\CG(\Gamma)$ contains a cycle $C$ of length $5$, then the subgraph of $G$ induced by the end-vertices of the edges corresponding to the vertices of $C$ is a $K_5$ graph.
\end{lemma}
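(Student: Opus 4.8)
From Corollary~\ref{co:5-cycles} we know that $C$ has length exactly $5$, so let $E(C) = \{e_0, e_1, e_2, e_3, e_4\}$ with $e_i = (u_i, v_i)$ and $e_i$ crossing $e_{i+1}$ (indices mod $5$). The plan is to exploit the cyclic structure established in the proof of Lemma~\ref{le:oddcycles}: since $\Gamma$ is outer fan-planar, for each $i$ the vertices $v_i$ and $u_{i+2}$ must coincide, for otherwise $e_{i+1}$ would be crossed by the two independent edges $e_i$ and $e_{i+2}$. Writing $a_i := u_i$, this collapsing of endpoints forces the ten endpoints to reduce to the five distinct vertices $a_0, a_1, a_2, a_3, a_4$ appearing in clockwise order on the outer boundary, with $e_i$ joining $a_i$ to $a_{i+2}$ (mod $5$). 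I would first carry out this identification carefully — matching the index arithmetic $v_i = u_{i+2}$ to the length-$5$ case, so that the five ``chords'' $e_i = (a_i, a_{i+2})$ are precisely the five ``long diagonals'' of the cyclic arrangement $a_0, \dots, a_4$. These are the five edges of a $5$-cycle on $\{a_0,\dots,a_4\}$ taken in the ``pentagram'' order, i.e.\ the graph $a_0 a_2 a_4 a_1 a_3 a_0$.

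Next I would show the five vertices $a_0, \dots, a_4$ are pairwise distinct. Distinctness of consecutive ones ($a_i \ne a_{i+1}$) and of ``diagonal'' ones ($a_i \ne a_{i+2}$, which is just saying $e_i$ is a genuine edge) is immediate; the only thing to rule out is a coincidence forced by the identifications wrapping around, but since $5$ is odd and $\gcd(2,5)=1$ the orbit of the shift-by-$2$ map has full size $5$, so all five are distinct. Then it remains to add the other five edges, i.e.\ the ``short'' pairs $(a_i, a_{i+1})$, to conclude that the induced subgraph on $\{a_0,\dots,a_4\}$ is $K_5$. For this I would use the maximal-density hypothesis together with the same rerouting argument used in Lemma~\ref{le:oddcycles} and Lemma~\ref{le:outer-edges-maximal-outer-fan-planar}: the five vertices $a_0,\dots,a_4$ are consecutive on the outer boundary only in the cyclic sense — more precisely, the part of the outer boundary between $a_i$ and $a_{i+1}$ (in clockwise order) is a face-region of $\Gamma$, since we showed in Lemma~\ref{le:oddcycles} that the edges of $E(C)$ are not crossed by any edge outside $E(C)$ and bound a region containing no vertices other than the $a_j$'s. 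Hence one can insert the edge $(a_i, a_{i+1})$ routed along that boundary region without creating any crossing at all, and in particular without violating fan-planarity; by maximality this edge is already present in $G$. Doing this for all $i$ gives all $\binom{5}{2}=10$ edges on $\{a_0,\dots,a_4\}$, so the induced subgraph is $K_5$.

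The main obstacle I expect is the bookkeeping around the boundary structure: I need to be sure that ``the region between $a_i$ and $a_{i+1}$'' is genuinely an empty face of the outer region — no other real vertex of $G$ sits there, and no edge of $\Gamma$ blocks the insertion of $(a_i,a_{i+1})$. The first point follows because any real vertex lying strictly between $a_i$ and $a_{i+1}$ on the boundary, together with the fact that the $e_j$'s are uncrossed by outside edges, would be enclosed/separated in a way contradicting that all vertices lie on the outer face; the second point follows from the established fact (proof of Lemma~\ref{le:oddcycles}) that edges of $E(C)$ receive no crossings from outside $E(C)$, so the boundary arc between $a_i$ and $a_{i+1}$ is crossing-free and can be doubled by a curve arbitrarily close to it. Once these two facts are nailed down the rest is the routine rerouting-by-maximality argument, so I would spend the bulk of the write-up on pinning down the boundary/face structure precisely.
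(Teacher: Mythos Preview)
Your overall strategy matches the paper's exactly: identify the endpoints via $v_i=u_{i+2}$, obtain the five vertices $a_0,\dots,a_4$ in cyclic order with the pentagram edges $e_i=(a_i,a_{i+2})$ already present, and then use maximality to force in the five ``short'' edges $(a_i,a_{i+1})$.

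There is, however, a genuine gap in your argument for the short edges. You assert that no real vertex of $G$ can lie strictly between $a_i$ and $a_{i+1}$ on the outer boundary, justifying this by saying such a vertex ``would be enclosed/separated in a way contradicting that all vertices lie on the outer face.'' This is false: the ``pocket'' bounded by the outer-boundary arc from $a_i$ to $a_{i+1}$ and by the two fragments of $e_{i-1}$ and $e_i$ meeting at their crossing point is itself part of the outer face, so vertices sitting there are perfectly compatible with outer fan-planarity. Concretely, take two copies of $K_5$ glued along an edge (the construction of Lemma~\ref{le:lowerbound}); this graph is maximally dense, and for one of the two pentagrams three extra vertices sit between some $a_i$ and $a_{i+1}$. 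So you cannot conclude the pocket is a single empty face of $\Gamma$, and your ``route close to the boundary'' plan needs more care.

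The paper handles exactly this case. It observes that any vertex $w$ lying between $a_i$ and $a_{i+1}$ cannot be joined to any $a_j$ with $j\notin\{i,i+1\}$, since such an edge would have to cross two independent edges among the $e_k$. Hence every edge incident to such a $w$ stays inside the pocket. One can then route $(a_i,a_{i+1})$ hugging the \emph{inner} boundary of the pocket (the two fragments of $e_{i-1}$ and $e_i$), which are uncrossed by anything outside $E(C)$; this creates no crossing and leaves all pocket vertices still on the outer face. Maximality then gives $(a_i,a_{i+1})\in E(G)$. Your write-up should replace the ``empty pocket'' claim with this argument.
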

\begin{proof}
Let $E(C) = \{e_0=(u_0,v_0),\dots,e_{4}=(u_{4},v_{4})\}$ be the set of $5$ edges of $G$ corresponding to the vertices of $C$, such that $e_i$ crosses $e_{i+1}$ for $i=0,\dots,4$, where indices are taken modulo $5$.

With the same argument used in the proof of Lemma~\ref{le:oddcycles}, vertices $v_{i}$ and $u_{i+2}$ must coincide, for $i=0,\dots,4$. It follows that $u_{i}$ precedes $u_{i+1}$ walking clockwise on the boundary of the outer face of $\Gamma$, and that $u_i$ is connected to $u_{i+2}$, for $i=0,\dots,4$. Moreover, $u_i$ and $u_{i+1}$ are connected by an edge, for $i=0,\dots,4$. Indeed, if there is no vertex of $G$ between $u_i$ and $u_{i+1}$ walking clockwise on the boundary of the outer face of $\Gamma$, for some $i=0, \dots, 4$, then the edge $(u_i,u_{i+1})$ can be added to $\Gamma$ without creating any crossing and so that all vertices remain on the outer face. If there is a vertex of $G$ between $u_i$ and $u_{i+1}$ walking clockwise on the boundary of the outer face of $\Gamma$ then it is easy to see that this vertex cannot be linked to any vertex $u_j$ distinct from $u_i$ and $u_{i+1}$, because this would cause a forbidden crossing (two independent edges crossed by an edge); it follows that edge $(u_i,u_{i+1})$ can be still added without creating crossing and so that all vertices of $G$ remain on the outer face. Hence, the subgraph induced by $u_0, u_1, \dots, u_4$ is $K_5$.\qed
\end{proof}

The next lemma proves the upper bound on the density of outer fan-planar graphs. 

\begin{lemma}\label{le:3n-5}
Let $G$ be a maximally dense outer fan-planar graph with $n$ vertices and $m$ edges. 
Then $m \leq 3n-5$ edges.
\end{lemma}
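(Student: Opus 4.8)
The plan is to bound $m$ by analyzing the structure of $\CG(\Gamma)$ using the lemmas already established. By Corollary~\ref{co:5-cycles}, every odd cycle of $\CG(\Gamma)$ has length exactly $5$, and by Lemma~\ref{le:k5} each such $5$-cycle sits on top of a $K_5$ subgraph of $G$ whose five ``outer'' vertices $u_0,\dots,u_4$ appear consecutively (in the cyclic boundary order) among the vertices of $G$. The first step is to argue that these $K_5$'s are essentially disjoint and internally ``closed off'': since the edges $e_0,\dots,e_4$ of such a $5$-cycle are crossed only by each other (shown in the proof of Lemma~\ref{le:oddcycles}), the region they bound is used up entirely by this $K_5$, so no vertex of $G$ lies strictly between two consecutive $u_i$'s, and the two odd-cycle subgraphs cannot share the crossing edges; at most they share boundary vertices/edges, and a counting argument will show we may treat each $K_5$ as contributing a bounded local excess over the planar/bipartite count.

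Next I would set up the global count by ``contracting away'' the odd-cycle gadgets. Concretely, let $G'$ be obtained from $G$ by removing, for each $5$-cycle $C$ in $\CG(\Gamma)$, one of its crossing edges $e_{i}$ (say $e_0$), thereby destroying that odd cycle. After doing this for a maximal collection of vertex-disjoint odd cycles, the resulting drawing $\Gamma'$ has a crossing graph with no odd cycles, so Lemma~\ref{le:3n-6} applies and gives $m' \le 3n-6$ for the corresponding edge count $m'$. The point is then to check that each removed edge can be ``charged'' carefully: a single $K_5$ has $10$ edges but we only delete one crossing edge per $5$-cycle, so naively this would only give $m \le 3n-6 + (\text{number of cycles})$, which is too weak. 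The fix is that the $K_5$ gadget is so rigid — its five outer vertices are consecutive on the outer boundary and the five chords are exactly the $e_i$ — that a $K_5$ replacing a triangular face of an outerplanar-like structure adds exactly one unit of density beyond what an outerplanar graph on the same vertices would have, and there can be at most one such surplus cycle overall. Indeed, two distinct $5$-cycles would each consume five consecutive outer vertices with no other vertices in between, and combined with the outer $n$-cycle (uncrossed, by Lemma~\ref{le:outer-edges-maximal-outer-fan-planar}) this forces $n$ small; for $n$ large enough only one such gadget can coexist with the rest of the graph, contributing exactly $+1$ over the $3n-6$ bound.

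So the key steps, in order, are: (1) invoke Corollary~\ref{co:5-cycles} and Lemma~\ref{le:k5} to fix the shape of every odd cycle; (2) use the ``no external crossings'' fact from the proof of Lemma~\ref{le:oddcycles} to show distinct odd-cycle $K_5$'s do not share crossing edges and leave no room for extra vertices between consecutive $u_i$'s; (3) show that the presence of two or more such disjoint $K_5$ gadgets, together with the uncrossed outer boundary, forces $n$ to be small (below the tightness threshold $n \ge 5$), so for the bound we may assume at most one odd cycle; (4) delete one crossing edge of that single $5$-cycle to get a drawing whose crossing graph is bipartite, apply Lemma~\ref{le:3n-6} to get $\le 3n-6$, and add back the one edge to conclude $m \le 3n-5$; handle the corner case where $\CG(\Gamma)$ has no odd cycle directly by Lemma~\ref{le:3n-6}, which gives the stronger $m \le 3n-6 \le 3n-5$.

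The main obstacle I anticipate is step (3): carefully ruling out the coexistence of several odd-cycle gadgets (or bounding their total contribution) without an ad hoc case analysis. One must be precise about how two $K_5$ blocks can be glued — whether they can share an outer edge or a single outer vertex — and verify that in each gluing pattern the maximal-density hypothesis, combined with the fact that the outer face is an uncrossed $n$-cycle, still yields at most $3n-5$; the cleanest route is probably to bound the surplus $m - (3n-6)$ by the number of independent odd cycles and then separately argue this number is at most $1$ by a direct planarity/consecutiveness argument on the outer boundary.
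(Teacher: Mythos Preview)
Your step~(3) is the fatal gap, and you correctly flag it as the main obstacle---but your proposed resolution is wrong. It is \emph{not} true that a maximally dense outer fan-planar graph contains at most one $K_5$ gadget. Look at the tight family built in Lemma~\ref{le:lowerbound}: for every $h \geq 1$ one glues $h$ copies of $K_5$ along outer edges to obtain a maximally dense outer fan-planar graph on $n = 3h+2$ vertices with exactly $3n-5$ edges. In such a graph $\CG(\Gamma)$ has $h$ pairwise vertex-disjoint $5$-cycles (one per $K_5$, since the shared outer edges are uncrossed). So the number of independent odd cycles can be as large as $(n-2)/3$, not $1$, and your bound $m \leq 3n-6 + (\text{number of odd cycles})$ yields only $m \leq (10n-20)/3$, strictly weaker than $3n-5$ for all $n > 5$.

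A secondary issue: your claim in step~(2) that ``no vertex of $G$ lies strictly between two consecutive $u_i$'s'' is also not established. The proof of Lemma~\ref{le:k5} explicitly allows such intermediate vertices---they simply cannot be joined to any $u_j$ with $j \notin \{i, i+1\}$---so an entire biconnected subgraph can hang off the outer edge $(u_i, u_{i+1})$.

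The paper avoids both problems by induction on the number $h$ of $K_5$ subgraphs. The key structural observation is that every outer edge $(u_i,u_{i+1})$ of a $K_5$ block $G^*$ that is not on the outer face of $\Gamma$ gives a separation pair of $G$ (otherwise some vertex would fail to lie on the outer face). This splits $G$ into $G^*$ plus at most five biconnected pieces $G_1,\dots,G_k$, each sharing exactly one edge with $G^*$ and each containing fewer $K_5$'s. Applying the inductive bound $3n_i-5$ to each $G_i$ and summing, using $n^* + \sum n_i \leq n + 2k$ to account for the $2k$ shared vertices, closes up to exactly $m \leq 3n-5$. The ``$-5$'' in every piece's bound is what makes the arithmetic work; no global bound on the number of $K_5$'s is needed or available.
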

\begin{proof}
Let $\Gamma$ be an outer fan-planar drawing of $G$. We first claim that $G$ is biconnected. Suppose by contradiction that $G$ is not biconnected, and let $C_1$ and $C_2$ be two distinct biconnected components of $G$ that share a cut-vertex $v$. Let $u$ be the first vertex of $G$ encountered while moving from $v$ clockwise on the boundary of the outer face of $\Gamma[C_1]$, and let $w$ be the first vertex encountered while moving from $v$ counterclockwise on the boundary of the outer face of $\Gamma[C_2]$. It is possible to suitably add an edge $(u,w)$ in $\Gamma$, still getting an outer fan-planar drawing, which contradicts the hypothesis that $G$ is maximally dense. 

\smallskip 
Now, by Corollary~\ref{co:5-cycles}, $\CG(\Gamma)$ can only have either even cycles or cycles of length $5$. Also, by Lemma~\ref{le:k5}, every cycle of length $5$ in $\CG(\Gamma)$ corresponds to a subset of edges whose end-vertices induce $K_5$. We prove the statement by induction on the number $h$ of subgraphs of $G$ isomorphic to $K_5$. 

\smallskip\noindent{\em Base Case.} If $h=0$ then, by Lemma~\ref{le:3n-6}, $G$ has at most $3n-6$ edges, hence the statement holds. 

\smallskip\noindent{\em Inductive Case.} Suppose by induction that the claim is true for $h\geq 0$, and suppose $G$ contains $h+1$ subgraphs isomorphic to $K_5$. Let $G^*$ be one of these $h+1$ subgraphs. Let $e=(u,v)$ be an edge on the outer face of $\Gamma[G^*]$ that is not on the outer face of $\Gamma$. Vertices $u$ and $v$ are a separation pair of $G$, otherwise there would be a vertex of $G$ that is not on the outer face of $\Gamma$, which is impossible because $\Gamma$ is an outer fan-planar drawing by hypothesis. Hence, we can split $G$ into two biconnected subgraphs that share only edge $e$, one of them containing $G^*$.  

Let $G_1, G_2, \dots, G_k$ ($k \leq 5$) be the biconnected subgraphs of $G$ distinct from $G^*$ such that 
each $G_i$ shares exactly one edge with $G^*$. Each $G_i$ ($i=1,2,\dots,k$) contains at most $h$ subgraphs isomorphic to $K_5$, and therefore it has at most $3n_i-5$ edges by induction, where $n_i$ denotes the number of vertices of $G_i$. On the other hand, $G^*$ has $3n^*-5=10$ edges, where $n^*=5$ is the number of vertices of $G^*$. It follows that $m \leq 3(n^*+n_1+\dots+n_k)-5(k+1)-k$ ($k \leq 5$). Since $n^*+n_1+\dots+n_k \leq n+2k$ we have $m \leq 3(n+2k)-5(k+1)-k=3n-5$.\qed
\end{proof}

The existence of an infinite family of outer fan-planar graphs that match the $3n-5$ bound is proved in the next lemma. Refer to Figure~\ref{fi:lowerbound} for an illustration.

\begin{figure}[tb]
\centering
\subfigure[]{\includegraphics[width=0.25\columnwidth]{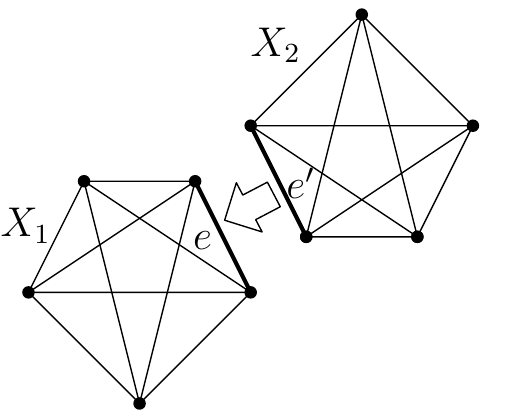}\label{fi:lowerbound-merge}}\hfill
\subfigure[]{\includegraphics[width=0.25\columnwidth]{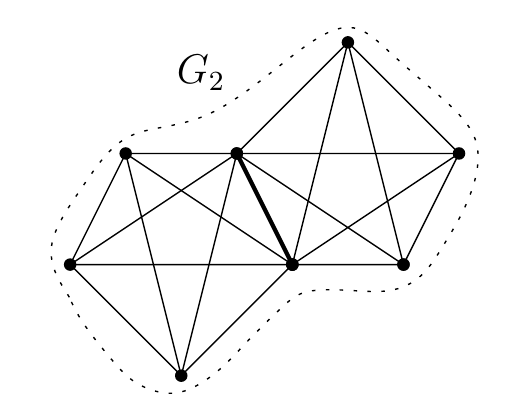}\label{fi:lowerbound-merge-2}}\hfill
\subfigure[]{\includegraphics[width=0.45\columnwidth]{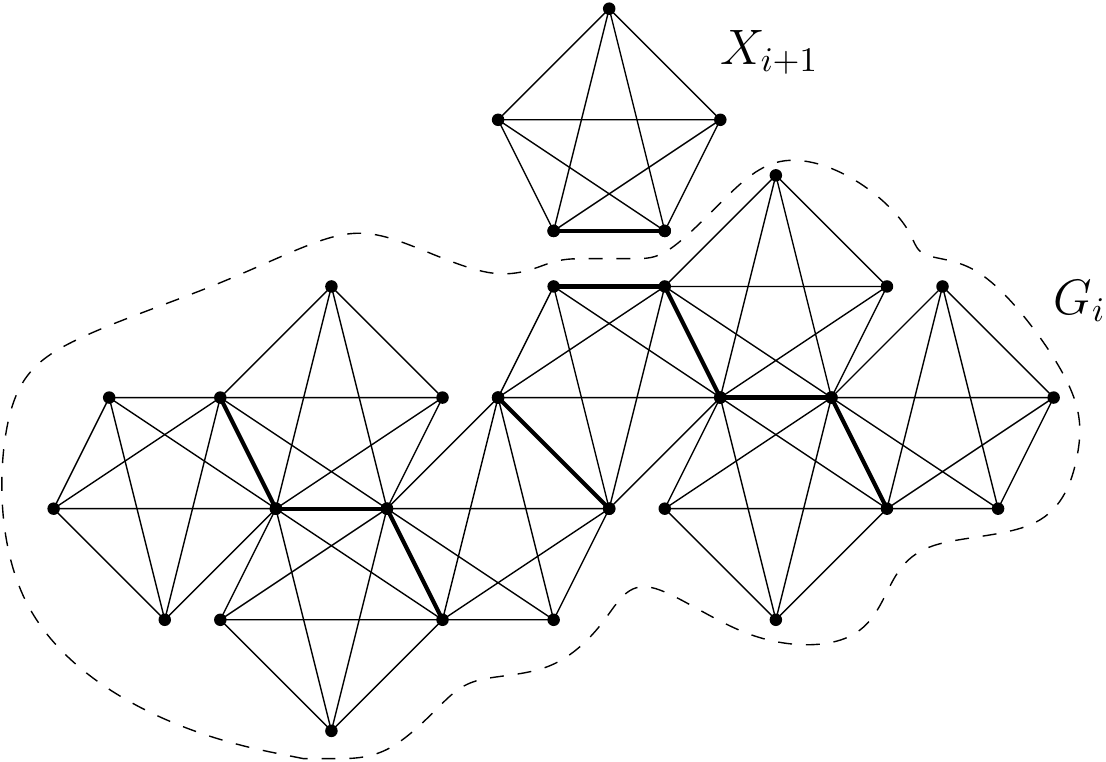}\label{fi:lowerbound-example}}
\caption{Illustration for the proof of Lemma~\ref{le:lowerbound}. (a) $X_1$ and $X_2$ before being merged. (b) Merging $X_1$ and $X_2$ into $G_2$. (c) $G_i$ and $X_{i+1}$, the bold edges are used for merging.}\label{fi:lowerbound}
\end{figure}

\begin{lemma}\label{le:lowerbound}
For any integer $h \geq 1$ there exists an outer fan-planar graph $G$ with $n=3h+2$ vertices and $m=3n-5$ edges.
\end{lemma}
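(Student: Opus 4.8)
The plan is to construct the extremal family explicitly by chaining together copies of $K_5$ along shared edges, mirroring the inductive decomposition used in the proof of Lemma~\ref{le:3n-5}. First I would fix, once and for all, a standard outer fan-planar drawing of $K_5$: place the five vertices on a circle and route the five "long" chords so that they form a 5-cycle in the crossing graph (i.e.\ the configuration of Corollary~\ref{co:5-cycles} and Lemma~\ref{le:k5}), while the five "short" edges lie on the outer boundary. Call this gadget $X$; it has $5$ vertices, $10$ edges, and every edge on its outer boundary is available as a "gluing edge." Note $10 = 3\cdot 5 - 5$, so a single copy already realizes the bound for $h=1$, $n=5$.

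Next I would build $G_h$ by induction on $h$. Set $G_1 = X_1$, a copy of $X$. To obtain $G_{i+1}$ from $G_i$, take a fresh copy $X_{i+1}$ of $X$ and identify one outer edge of $X_{i+1}$ with one outer edge of $G_i$ that currently lies on the outer face of the drawing of $G_i$ (see Figure~\ref{fi:lowerbound}(c)); geometrically this is done by placing $X_{i+1}$ in the outer face along that edge, so no new crossings are introduced and all vertices remain on the outer face. Each merge adds $5 - 2 = 3$ new vertices and $10 - 1 = 9$ new edges. Hence after $h-1$ merges we have $n = 5 + 3(h-1) = 3h+2$ vertices and $m = 10 + 9(h-1) = 9h+1$ edges. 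A direct check gives $3n - 5 = 3(3h+2) - 5 = 9h+1 = m$, as required.

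It then remains to verify two things: (1) the resulting drawing is genuinely outer fan-planar, and (2) after the merge there is still an outer edge available to continue the induction. For (1), crossings occur only within individual copies $X_i$ (since distinct copies occupy disjoint regions of the plane except for the shared boundary edge, which is uncrossed), and within each $X_i$ the drawing is fan-planar by construction; all vertices lie on the outer boundary by the same regional-disjointness argument, so configuration (b) of the definition is vacuous. For (2), each copy $X_i$ contributes several outer boundary edges, of which at most one or two are consumed by gluing (to its predecessor and possibly one successor), leaving boundary edges free; so the chain can always be extended. This establishes the lemma.

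The main obstacle I anticipate is not the counting — which is routine — but pinning down the explicit drawing of $X$ and checking that the gluing operation truly preserves outer fan-planarity, in particular that no edge of a newly attached copy ever crosses two independent edges. The cleanest way to handle this is to keep the copies in strictly disjoint topological disks that meet only along the single identified edge, so that the crossing graph of $G_h$ is exactly the disjoint union of the crossing graphs of the copies $X_i$ (each a $5$-cycle plus isolated outer-edge vertices); then fan-planarity of the whole follows immediately from fan-planarity of each piece.
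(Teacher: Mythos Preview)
Your proposal is correct and follows essentially the same construction as the paper: both build $G_h$ by chaining $h$ copies of $K_5$, each glued to the previous along a single outer-boundary edge, and both verify the invariants $n_i = 3i+2$, $m_i = 3n_i - 5$ by the same $+3$ vertices / $+9$ edges bookkeeping. Your version is slightly more explicit about why the glued drawing stays outer fan-planar (disjoint topological disks, crossing graph a disjoint union of $5$-cycles), but the argument is otherwise identical.
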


\begin{proof}
Consider $h$ graphs $X_1,\dots,X_h$, such that $X_i$ is isomorphic to $K_5$, for $i=1,\dots,h$. 
We now describe how to construct $G$. The idea is to ``glue'' $X_1,\dots,X_h$ together in such a way that they share single edges one to another. The proof is by induction on the number of merged graphs. Denote by $G_i$ the graph obtained after merging $X_1,\dots,X_i$, for $1 < i \leq h$. We prove by induction that $G_i$ respects the following invariants: (I1) it is an outer fan-planar graph; (I2) it has $n_i=3i+2$ vertices and $m_i = 3n_i-5$ edges. In the base case $i=2$, we merge $G_1=X_1$ to $X_2$ as follows. Pick an edge $e$ on the outer face of $X_1$ and an edge $e'$ on the outer face of $X_2$. Merge $X_1$ and $X_2$ by identifying the edge $e$ with $e'$, see also Figures~\ref{fi:lowerbound-merge} and~\ref{fi:lowerbound-merge-2}. The new graph $G_2$ is clearly an outer fan-planar graph with $n_2=5+5-2=8$ vertices and $m_2=10+10-1=19$ edges. Thus, the two invariants hold. 

In the inductive case, suppose we constructed $G_i$, for $2<i<h$, and we want to attach $X_{i+1}$ (see also Figure~\ref{fi:lowerbound-example}). Pick any edge $e$ on the outer face of $G_i$ and any edge $e'$ on the outer face of $X_{i+1}$. Merge the two graphs in the same way as done in the base case. It is immediate to see that (I1) holds. Also, $n_{i+1} = n_i + 3$ and $m_{i+1}=m_i+9$. Since by induction $m_i = 3n_i - 5$, then $m_{i+1} = 3n_i - 5 + 9 = 3n_{i+1} - 5$.\qed
\end{proof}

Lemma~\ref{le:3n-5} and~\ref{le:lowerbound} imply the following theorem.

\begin{theorem}\label{th:densityouterfan}
An outer fan-planar graph with $n$ vertices has at most $3n-5$ edges, and this bound is tight for $n \geq 5$.
\end{theorem}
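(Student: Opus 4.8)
The plan is to assemble Theorem~\ref{th:densityouterfan} directly from the two lemmas that precede it, so the ``proof'' is really just a short bookkeeping argument. For the upper bound, let $G$ be any outer fan-planar graph on $n$ vertices, and let $G'$ be a maximally dense outer fan-planar graph on $n$ vertices containing $G$ as a spanning subgraph (such a $G'$ exists by definition: add edges until no more can be added, then, if needed, compare with the densest one). Then $|E(G)| \le |E(G')|$, and Lemma~\ref{le:3n-5} gives $|E(G')| \le 3n-5$, so $|E(G)| \le 3n-5$. A subtlety worth a sentence: Lemma~\ref{le:3n-5} is stated for \emph{maximally dense} graphs, so I should phrase the reduction as ``it suffices to bound the number of edges of a maximally dense outer fan-planar graph on $n$ vertices,'' which is exactly what that lemma does.

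For tightness, I would invoke Lemma~\ref{le:lowerbound}: for every $h \ge 1$ it constructs an outer fan-planar graph with $n = 3h+2$ vertices and exactly $3n-5$ edges. This immediately settles all $n \equiv 2 \pmod 3$ with $n \ge 5$ (i.e.\ $h \ge 1$). The remaining work is to handle $n$ not of this form, i.e.\ $n \equiv 0$ and $n \equiv 1 \pmod 3$. The standard device is to take the graph $G_h$ from Lemma~\ref{le:lowerbound} on $3h+2$ vertices and attach a small gadget that adds one or two vertices while adding three edges per vertex, in order to keep the edge count at $3n-5$. Concretely, to add one vertex: pick an outer edge $(u,w)$ of the current drawing, with $u$ and $w$ consecutive on the outer face, place a new vertex $x$ in the outer face near that edge, and add the three edges $(x,u)$, $(x,w)$ and one more edge from $x$ to a third outer vertex $z$ adjacent-on-the-boundary region so that $(x,z)$ crosses only edges of a single fan; one checks this preserves outer fan-planarity and increases $(n,m)$ by $(1,3)$, turning a $3n-5$ example into another $3n-5$ example. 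Doing this once or twice reaches the residues $n \equiv 0$ and $n \equiv 1 \pmod 3$. Alternatively, and more cleanly, I would note that $n=5$ ($K_5$) and $n=6,7$ can be exhibited by small explicit drawings (e.g.\ $K_5$ plus a pendant-style $K_5$-corner attachment), and then every larger $n$ is obtained from one of $n-3$ by gluing a fresh $K_5$ along an outer edge exactly as in Lemma~\ref{le:lowerbound}, which adds $(3,9)=(3,3\cdot3)$ to $(n,m)$ and hence preserves the identity $m=3n-5$.

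The one genuine obstacle is the residue-class issue just described: Lemma~\ref{le:lowerbound} only delivers $n \equiv 2 \pmod 3$, so without an extra construction the theorem's claim ``tight for $n \ge 5$'' is not literally established. I would therefore make sure the write-up either (a) includes the one- and two-vertex augmentation gadgets with a brief fan-planarity check, or (b) supplies the two small base graphs on $6$ and $7$ vertices with $13$ and $16$ edges and then applies the $K_5$-gluing step of Lemma~\ref{le:lowerbound} verbatim to propagate to all larger $n$. Everything else — the inequality $m \le 3n-5$ and the case $n \equiv 2 \pmod 3$ of tightness — is immediate from the cited lemmas, so the proof reduces to: ``Upper bound: Lemma~\ref{le:3n-5} (after reducing to the maximally dense case). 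Lower bound: Lemma~\ref{le:lowerbound} for $n=3h+2$, plus small augmentation gadgets to cover the other two residues modulo~$3$.''

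\begin{proof}
The upper bound is immediate from Lemma~\ref{le:3n-5}: every outer fan-planar graph on $n$ vertices is a subgraph of some maximally dense outer fan-planar graph on $n$ vertices, which has at most $3n-5$ edges.

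For the lower bound, Lemma~\ref{le:lowerbound} exhibits, for every $h \ge 1$, an outer fan-planar graph with $n=3h+2$ vertices and $m=3n-5$ edges; this covers all $n \ge 5$ with $n \equiv 2 \pmod 3$. For the remaining residues, start from such a graph (for $h$ suitably chosen) drawn as an outer fan-planar drawing $\Gamma$, and repeatedly augment it by one vertex while adding three edges, so as to preserve the equality $m=3n-5$. To add a vertex, pick an outer edge $(u,w)$ of $\Gamma$ with $u,w$ consecutive on the outer boundary, and let $z$ be the neighbour of $u$ on the boundary other than $w$. Place a new vertex $x$ in the outer face close to $u$, and add edges $(x,u)$, $(x,w)$, and $(x,z)$; route $(x,z)$ so that it crosses only the edge $(u,w)$ and possibly edges of the fan of $u$. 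The resulting drawing keeps all vertices on the outer face, and the only edge crossed by $(x,z)$ from outside the fan of $u$ is $(u,w)$, so no edge is crossed by two independent edges; hence it is outer fan-planar, with one more vertex and three more edges. Applying this augmentation once or twice to graphs of the form $n=3h+2$ yields outer fan-planar graphs with $3n-5$ edges for all $n \ge 5$.\qed
\end{proof}
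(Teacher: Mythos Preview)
Your overall strategy coincides with the paper's: the paper's entire proof of Theorem~\ref{th:densityouterfan} is the single sentence ``Lemma~\ref{le:3n-5} and~\ref{le:lowerbound} imply the following theorem.'' Your upper-bound argument (reduce to a maximally dense graph and apply Lemma~\ref{le:3n-5}) and your use of Lemma~\ref{le:lowerbound} for the residues $n\equiv 2\pmod 3$ are therefore exactly what the paper does. You also correctly observe something the paper passes over in silence: Lemma~\ref{le:lowerbound} only manufactures examples with $n=3h+2$, so tightness for the other residues is not literally supplied by the cited lemmas.

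However, your augmentation gadget does not work as written. You claim that routing $(x,z)$ close to $u$ so that it crosses only edges of the fan of $u$ yields an outer fan-planar drawing, because ``no edge is crossed by two independent edges''. This checks fan-planarity only from the side of $(x,z)$; it ignores the constraint on the edges $(u,a)$ that $(x,z)$ newly crosses. Concretely, take the terminal $K_5$ in the glued construction with outer boundary $1,2,3,4,5$ and choose $z=1$, $u=2$, $w=3$. Routing $(x,1)$ around $2$ forces it to cross the interior edge $(2,4)$. But $(2,4)$ is already crossed by $(1,3)$ and $(3,5)$ inside the $K_5$; after the augmentation, $(2,4)$ is crossed by $(3,5)$ and $(x,1)$, which are independent. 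So the resulting drawing is \emph{not} fan-planar. The same obstruction arises for every choice of three consecutive outer vertices of a $K_5$, so the gadget fails precisely on the graphs produced by Lemma~\ref{le:lowerbound}. Your fallback plan~(b) has the same issue: it presupposes explicit $6$- and $7$-vertex outer fan-planar graphs with $13$ and $16$ edges, but you do not exhibit them, and the obvious attempts (add a degree-$3$ vertex to $K_5$) run into the very obstruction above. In short, the part of your proof that goes beyond the paper's argument is not correct as stated; the rest matches the paper.
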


An obvious consequence of Theorem~\ref{th:densityouterfan} and of the definition of outer fan-planar graphs that are maximally dense is the following fact.

\begin{corollary}\label{co:maximallydense}
Every maximally dense outer fan-planar graph with $n=3h+2$ vertices $(h \geq 1)$ has $3n-5$ edges.
\end{corollary}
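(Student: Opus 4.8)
The plan is to combine the upper bound of Theorem~\ref{th:densityouterfan} with the matching lower-bound construction of Lemma~\ref{le:lowerbound}, and then unwind the definition of ``maximally dense''. Concretely, fix $h \geq 1$ and set $n = 3h+2$. Let $G$ be a maximally dense outer fan-planar graph with $n$ vertices, and let $m$ denote its number of edges. By definition, $m$ is the maximum number of edges over all outer fan-planar graphs on $n$ vertices, so I first want to pin down this maximum.

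First I would apply Theorem~\ref{th:densityouterfan}: every outer fan-planar graph with $n$ vertices has at most $3n-5$ edges, hence $m \leq 3n-5$. Next I would invoke Lemma~\ref{le:lowerbound}, which (since $n$ is of the form $3h+2$ with $h \geq 1$) furnishes an outer fan-planar graph $G'$ on exactly $n$ vertices with exactly $3n-5$ edges. Because $G$ has the maximum edge count among all outer fan-planar graphs on $n$ vertices, we get $m \geq |E(G')| = 3n-5$. Combining the two inequalities yields $m = 3n-5$, which is the claim.

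There is essentially no obstacle here: the statement is a direct corollary, and the only care needed is to check that the divisibility hypothesis $n = 3h+2$ is exactly what makes Lemma~\ref{le:lowerbound} applicable, so that the lower bound $3n-5$ is actually realized for this value of $n$ (for general $n$ not of this form, Lemma~\ref{le:lowerbound} does not directly apply, which is why the hypothesis is stated). One could also remark, for completeness, that such a maximally dense graph exists at all for these values of $n$ — again guaranteed by Lemma~\ref{le:lowerbound} — so the statement is not vacuous.
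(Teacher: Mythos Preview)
Your argument is correct and matches the paper's own justification: the corollary is stated as an immediate consequence of Theorem~\ref{th:densityouterfan} (which packages both the $3n-5$ upper bound and the tightness via Lemma~\ref{le:lowerbound}) together with the definition of maximally dense. Your unpacking of these two ingredients is exactly what is intended.
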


\remove{
We conclude this section by proving that if $G$ is a maximal outer fan-planar graph with $n$ vertices, then it has at least $2n$ edges. 

\begin{theorem}\label{th:maximal-mindensity}
Let $G$ be a maximal outer fan-planar graph with $n$ vertices and $m$ edges. Then $m \geq 2n$.
\end{theorem}

\begin{figure}[tb]
\centering
\subfigure{\includegraphics[width=0.3\columnwidth]{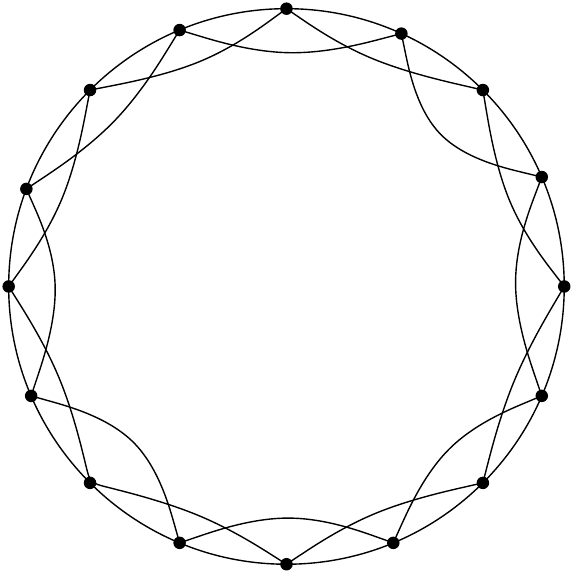}\label{fi:maximal-1}}\hfill
\subfigure{\includegraphics[width=0.3\columnwidth]{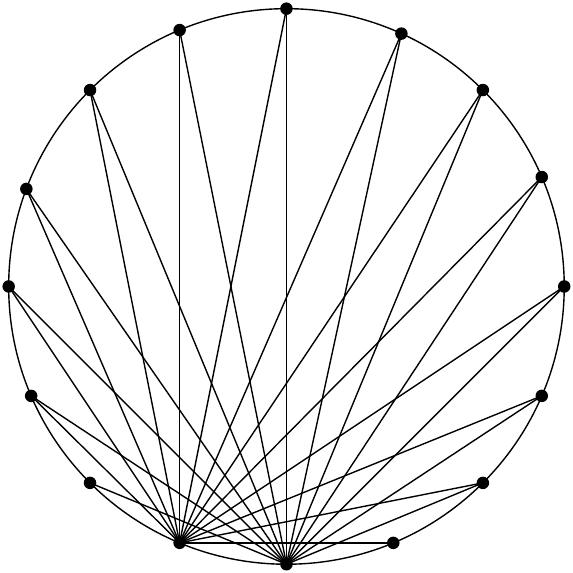}\label{fi:maximal-2}}
\caption{Illustration of the proof of Theorem~\ref{th:maximal-mindensity}}\label{fi:maximal}
\end{figure}
\todo[inline]{Here the idea is that we should focus on triconnected outer fan-planar graphs. Furthermore, we should prove that any triconnected maximal outer fan-planar graph $G$ with $n$ vertices is isomorphic to one of the following two graphs: $(i)$ the one in Figure~\ref{fi:maximal-1}; $(ii)$ the one in Figure~\ref{fi:maximal-2}. Observe that the graph in Figure~\ref{fi:maximal-1} has $2n$ edges, whereas the graph in Figure~\ref{fi:maximal-2} has $3n-5$ edges. Any idea on how to prove it? Maybe Luca can give us some hints, as this is part of the characterization that Luca et al. found during Bertinoro. Or maybe we should just say that we know this result by a private communication/arxiv paper by Luca et al.}
}%remove

Concerning $2$-layer fan planar graphs, we already observed that a $2$-layer fan planar graph $G$ is an outer fan-planar graph. Also, since all vertices on the same layer form an independent set, graph $G$ is bipartite. We prove the following.

\begin{theorem}\label{th:2layer}
A $2$-layer fan-planar graph with $n$ vertices has at most $2n-4$ edges, and this bound is tight for $n \geq 3$.
\end{theorem}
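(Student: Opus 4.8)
The plan is to prove the upper bound $m \le 2n-4$ first and then exhibit a matching family of graphs. For the upper bound, let $\Gamma$ be a $2$-layer fan-planar drawing of $G$ with layers $L_1$ and $L_2$. Since every edge joins a vertex of $L_1$ to a vertex of $L_2$ and all edges are vertically monotone, $\Gamma$ is in particular an outer fan-planar drawing, so I may assume $G$ is maximal (adding edges only helps) and hence, by Lemma~\ref{le:outer-edges-maximal-outer-fan-planar}, the outer boundary of $\Gamma$ consists of $n$ uncrossed edges. In the $2$-layer setting the outer face looks very special: walking around it we see the vertices of $L_1$ in left-to-right order, then the vertices of $L_2$ in right-to-left order, so the outer boundary is a cycle (or path, in degenerate cases) alternating between the two ``ends'' of the layers. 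The first key step is therefore a structural claim analogous to Lemma~\ref{le:oddcycles}/Lemma~\ref{le:k5}: in a $2$-layer drawing the crossing graph $\CG(\Gamma)$ cannot contain any odd cycle at all, because an odd cycle would force a $K_5$ (by Lemma~\ref{le:k5}) and $K_5$ is not bipartite, contradicting that $G$ is bipartite. Hence $\CG(\Gamma)$ is bipartite.

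Given that $\CG(\Gamma)$ is bipartite, I would run the same argument as in Lemma~\ref{le:3n-6}, but now exploiting that the two edge-classes $E_1, E_2$ induce not merely outerplanar graphs but $2$-layer \emph{planar} (i.e., crossing-free $2$-layer) graphs on the bipartition $(L_1,L_2)$, since within a class no two edges cross. A crossing-free $2$-layer drawing of a connected bipartite graph on $n$ vertices has at most $n-1$ edges (it is a caterpillar forest; more precisely each connected $2$-layer planar graph is a tree), so $|E_i| \le n-1$. This already gives $m \le 2n-2$, which is off by $2$. To recover the $-4$, I would push harder using Lemma~\ref{le:outer-edges-maximal-outer-fan-planar}: the $n$ outer edges are uncrossed, so they are isolated vertices of $\CG(\Gamma)$ and can all be placed in the same class, say $E_1$; but the outer boundary in a $2$-layer drawing is itself a path/cycle that is $2$-layer planar, and the extremal $2$-layer planar graph $G_1 = (V,E_1)$ containing all $n$ outer edges together with whatever crossing-free edges were assigned to it is still a forest, hence $|E_1| \le n-1$; meanwhile $G_2 = (V,E_2)$ contains none of the (at least) two ``extreme'' outer edges incident to the leftmost vertices of $L_1$ and $L_2$, and one argues $|E_2| \le n - 3$ by noting $G_2$ is a forest that misses at least three of the vertices as leaves touching the outer face — I would make this precise by a careful count of which vertices of $L_1 \cup L_2$ can be endpoints of $E_2$-edges lying on the boundary. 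Adding $|E_1| + |E_2| \le (n-1) + (n-3) = 2n-4$ finishes the bound.

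For tightness, I would take the graph obtained from a path $v_1, w_1, v_2, w_2, \dots$ alternating between the two layers (a ``zig-zag'' caterpillar on $n$ vertices, contributing $n-1$ uncrossed edges) and then add a second crossing-free ``zig-zag'' shifted by one, or equivalently add all the ``long'' edges $v_i w_{i+1}$ that cross only the fan at a shared vertex; concretely, the graph with $L_1 = \{a_1,\dots,a_p\}$, $L_2=\{b_1,\dots,b_q\}$ and edges $a_i b_i$, $a_i b_{i+1}$, $a_{i+1} b_i$ ordered left to right gives $2n-4$ edges and is easily checked to be $2$-layer fan-planar (each edge crosses only edges sharing one of its endpoints), for every $n \ge 3$. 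I would include a small figure and verify the edge count ($2(n-2)+$ boundary edges) directly.

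The main obstacle is the recovery of the sharp additive constant: getting $2n-2$ is routine from bipartiteness of $\CG(\Gamma)$, but squeezing out the extra $-2$ requires a genuinely $2$-layer-specific argument about the outer boundary — identifying which two edges are forced to be ``lost'' from the second class — and care is needed because the drawing may a priori be disconnected, in which case the forest bounds degrade and one must first argue (again by maximality) that $G$ is connected, indeed that both layers are nonempty and the drawing is ``taut''. Handling these degenerate cases cleanly, rather than the main inequality, is where the work lies.
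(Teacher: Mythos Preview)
Your core insight---that the crossing graph of a $2$-layer fan-planar drawing must be bipartite because $G$ itself is bipartite---is exactly the engine the paper uses as well. However, several steps in your execution do not go through as written, and the paper closes the gap by a different (and much shorter) trick.

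First, you invoke Lemma~\ref{le:outer-edges-maximal-outer-fan-planar} and Lemma~\ref{le:k5} directly on $G$, but both lemmas assume that $G$ is maximal (respectively, maximally dense) \emph{as an outer fan-planar graph}. A $2$-layer fan-planar graph is bipartite and is never maximal outer fan-planar, so neither hypothesis is met. In particular, the outer boundary of a $2$-layer drawing has at most two edges (the leftmost and rightmost), not $n$; your argument that ``the $n$ outer edges are uncrossed and can all go into $E_1$'' is therefore based on a false premise. The bipartiteness of $\CG(\Gamma)$ can still be salvaged by re-running the geometric part of the proof of Lemma~\ref{le:oddcycles} (which shows that an odd $\ell$-cycle in $\CG(\Gamma)$ forces an odd $\ell$-cycle $u_0u_2u_4\cdots$ in $G$, without using maximal density), but you would have to do this explicitly rather than cite the lemma.

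Second---and this is the real gap, which you yourself flag---splitting $E$ into two crossing-free $2$-layer (hence forest) subgraphs only gives $m\le 2n-2$. Your plan to recover the missing $-2$ by arguing $|E_2|\le n-3$ via ``missing leaves on the outer face'' relies on the incorrect outer-face picture above and is not substantiated. The paper sidesteps this entirely: instead of tightening the forest count, it \emph{adds} the $n-2$ missing path edges along each layer to obtain a supergraph $G^*$ that is still outer fan-planar and still $K_5$-free (three same-layer vertices never form a triangle in $G^*$), so $\CG$ has no odd cycle and Lemma~\ref{le:3n-6} gives $m^*\le 3n-6$; subtracting the $n-2$ added edges yields $m\le 2n-4$ immediately. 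This additive trick is the idea you are missing.

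Finally, your tightness family does not reach $2n-4$ edges: with $p=q=k$ the graph you describe has $3k-2$ edges, not $4k-4$. The paper uses $K_{2,n-2}$, which has exactly $2n-4$ edges and is easily drawn $2$-layer fan-planar.
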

\begin{proof}
Let $G$ be a maximally dense $2$-layer fan-planar graph with $n$ vertices and $m$ edges, and let $\Gamma$ be a $2$-layer fan-planar drawing of $G$. Denote by $V_1=\{v_1,\dots,v_{n_1}\}$ and $V_2=\{v_{n_1+1},\dots,v_{n}\}$ the two independent sets of vertices of $G$. Without loss of generality, suppose that in $\Gamma$ we have $v_i$ precedes $v_{i+1}$ along the layer of $V_1$ (for $i=1,\dots,n_1-1$), and $v_j$ precedes $v_{j+1}$ along the layer of $V_2$ (for $j=n_1+1,\dots,n-1$). See Figure~\ref{fi:2layer-proof}. Construct from $G$ a super-graph $G^*$, by adding an edge $(v_i,v_{i+1})$, for $i=1,\dots,n_1-1$, and an edge $(v_j,v_{j+1})$, for $j=n_1+1,\dots,n$ (see Figure~\ref{fi:2layer-proof-2}). Graph $G^*$ is still outer fan-planar. Moreover, $G$ does not contain a $K_5$ subgraph (because $G$ is bipartite), and this implies that also $G^*$ does not contain a $K_5$ subgraph, as otherwise at least three vertices of the same layer in $G$ should form a $3$-cycle in $G^*$ (which does not happen by construction). Thus, by Lemma~\ref{le:oddcycles} and Property~\ref{pr:3-mutually}, the crossing graph of any outer fan-planar drawing of $G^*$ contains only even cycles. Hence, denoted as $m^*$ the number of edges of $G^*$, by Lemma~\ref{le:3n-6} we have $m^* \leq 3n-6$, and therefore $m = m^* -(n-2) \leq 2n-4$.

\begin{figure}
\centering
\subfigure[]{\includegraphics[width=0.32\columnwidth]{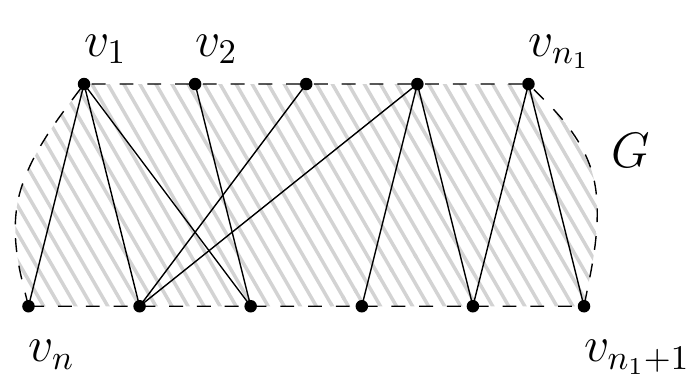}\label{fi:2layer-proof}}\hfill
\subfigure[]{\includegraphics[width=0.32\columnwidth]{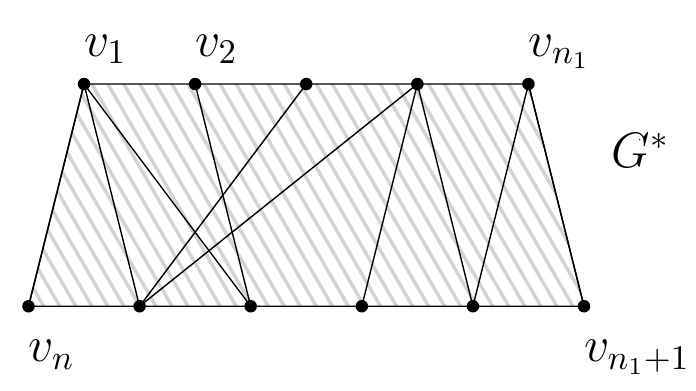}\label{fi:2layer-proof-2}}\hfill
\subfigure[]{\includegraphics[width=0.32\columnwidth]{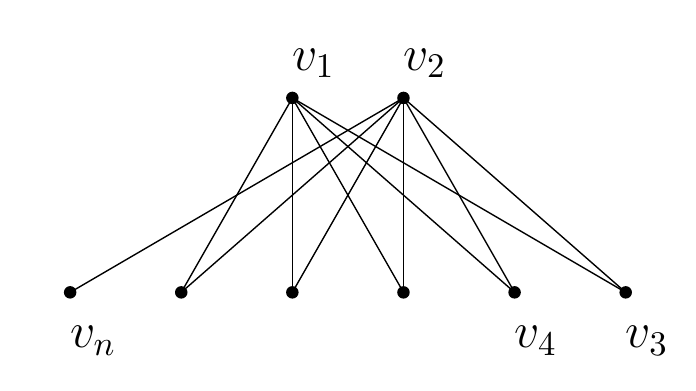}\label{fi:2layer-maximallydense}}
\caption{Illustration of the proof of Theorem~\ref{th:2layer}.}\label{fi:2layer}
\end{figure}

A family of $2$-layer fan-planar graphs with $2n-4$ edges is the family of the bipartite complete graphs $K_{2,n-2}$ (see Figure~\ref{fi:2layer-maximallydense}).\qed
\end{proof}

\section{Fan-planar and $k$-planar Graphs}\label{se:2planar}

A \emph{$k$-planar drawing} is a drawing where each edge is crossed at most $k$ times, and a \emph{$k$-planar graph} is a graph that admits a $k$-planar drawing. Clearly, every $1$-planar graph is also a fan-planar graph. Also, both the maximum number of edges of fan-planar graphs~\cite{DBLP:journals/corr/KaufmannU14} and the maximum number of edges of $2$-planar graphs~\cite{DBLP:journals/combinatorica/PachT97} have been shown to be $5n-10$. Thus it is natural to ask what is the relationship between fan-planar graphs and $2$-planar graphs and, more in general, what is the relationship between fan-planar and $k$-planar graphs for $k \geq 1$. In this section we show that there exist fan-planar graphs that are not $k$-planar, for every $k\geq1$, and that there are $k$-planar graphs (for $k>1$) that are not fan-planar. 

The existence of fan-planar graphs that are not $k$-planar can be proved using a counting argument on the minimum number of crossings of graph drawings. The \emph{crossing number} $cr(G)$ of a graph $G$ is the smallest number of crossings required in any drawing of $G$.

\begin{theorem}\label{th:fan-notk}
For every integer $k\geq1$ there exists a graph that is fan-planar but not $k$-planar.
\end{theorem}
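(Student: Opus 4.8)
The plan is to exploit the $5n-10$ upper bound on the density of fan-planar graphs (due to Kaufmann and Ueckerdt) against the superlinear growth of the crossing number of dense graphs. First I would recall the crossing number inequality (the Leighton--Ajtai--Chv\'atal--Newborn--Szemer\'edi bound): there is a constant $c>0$ such that every graph $G$ with $n$ vertices and $m \geq 4n$ edges satisfies $cr(G) \geq c\, m^3/n^2$. For a $k$-planar graph on $n$ vertices and $m$ edges, each edge carries at most $k$ crossings, so the total number of crossings is at most $km/2$, hence $cr(G) \leq km/2$. Combining the two bounds, a $k$-planar graph with $m \geq 4n$ edges must satisfy $c\, m^3/n^2 \leq km/2$, i.e. $m \leq \sqrt{k/(2c)}\; n$; so $k$-planar graphs are sparse, with only $O(\sqrt{k}\, n)$ edges.

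The key step is then to produce, for each fixed $k$, a fan-planar graph whose edge count exceeds this $k$-planar threshold. The densest fan-planar graphs have $5n-10$ edges, which is linear in $n$ but with a fixed slope $5$; that alone does not beat $O(\sqrt k\, n)$ once $k$ is large. So instead I would take many disjoint copies of a dense fan-planar graph: fix a fan-planar graph $H$ on $n_0$ vertices with close to $5n_0 - 10$ edges (for instance the extremal examples of Kaufmann--Ueckerdt, or simply $K_5$ or $K_6$, both of which are fan-planar), and let $G_t$ be the disjoint union of $t$ copies of $H$. A disjoint union of fan-planar graphs is clearly fan-planar (draw the copies in disjoint regions of the plane). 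Then $G_t$ has $n = t\,n_0$ vertices and $m = t\cdot |E(H)|$ edges, so the ratio $m/n = |E(H)|/n_0$ is a fixed constant $\rho > 1$ independent of $t$. Now for the chosen $k$, pick $H$ dense enough that $\rho > \sqrt{k/(2c)}$ is impossible with a single small graph — but since the $k$-planar bound $m \le \sqrt{k/(2c)}\,n$ has slope growing like $\sqrt k$, this comparison of fixed slopes fails. The correct fix is to make the ratio $\rho$ itself grow: take $H = K_r$ for a suitable $r$ (note $K_5$ and $K_6$ are fan-planar, but larger complete graphs are not, so this also caps $\rho$).

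The cleaner route, and the one I expect the authors use, is to not insist on disjoint unions but on a single connected dense fan-planar graph whose density, while still $5n - 10$, is enough because the $k$-planar constraint on a graph that additionally must be connected and have bounded degree forces an even sharper bound — however the simplest correct argument is: for a fan-planar graph with exactly $5n - 10$ edges (which exists for $n \geq 20$), we have $cr(G) \le$ (number of crossings in a fan-planar drawing), and one shows a fan-planar drawing on $n$ vertices has $O(n)$ crossings, say at most $\alpha n$ for an absolute constant $\alpha$ — indeed each edge in a fan-planar drawing is crossed only by edges of a single fan, and a more careful count bounds total crossings linearly. Then take $G$ to be $t$ disjoint copies of this extremal graph: $n = 20t$, $m = 90t$, and $cr(G) \le \alpha n$. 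If $G$ were $k$-planar then by the crossing inequality applied to the connected... no: apply it to $G$ directly: since $m = 4.5 n > 4n$, we get $cr(G) \ge c m^3/n^2 = c (4.5)^3 n$, which is $\Theta(n)$, and $k$-planarity gives $cr(G) \le k m /2 = 2.25 k n$. These are both linear, so for large $k$ the $k$-planar bound is not violated — confirming that the crossing-number argument alone handles only $k$ up to a constant. The genuine obstacle, and the heart of the proof, is therefore to get a fan-planar family whose \emph{crossing number grows superlinearly}: one should take $G = K_m$-minor-like dense fan-planar graphs is impossible, so instead glue copies so that crossings are forced between copies — e.g. take a fan-planar graph, then add a sparse ``connector'' structure, or simply observe that the complete bipartite graph or a blow-up construction yields fan-planar graphs with $\Theta(n)$ edges but crossing number $\omega(n)$. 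Concretely: build $G$ by taking the extremal $5n-10$ fan-planar graph and note any drawing has $\ge c n^{3}/n^2 = cn$ crossings while... The main obstacle I anticipate is precisely reconciling these linear bounds, and the resolution is to use a fan-planar graph whose number of edges is $5n-10$ but such that it is built to contain, as a subgraph, something forcing $\Theta(n\log n)$ or $\Theta(n^{1+\epsilon})$ crossings — or, most likely, the paper sidesteps this by a direct combinatorial argument showing a specific fan-planar gadget (a long chain of overlapping $K_5$'s, as in Lemma~\ref{le:lowerbound}, closed into a structure) cannot be drawn $k$-planar because some edge is forced to cross more than $k$ others.

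I would therefore organize the final proof as follows: (1) state the crossing inequality $cr(G) \ge c\,m^3/n^2$ for $m \ge 4n$; (2) observe every $k$-planar $n$-vertex graph has $cr(G) \le kn$ roughly (after sparsification it has $\le \sqrt{16.875\, k}\, n$ edges, the known bound of Pach--T\'oth, so it suffices to know $k$-planar graphs have $O(\sqrt{k}\,n)$ edges); (3) exhibit fan-planar graphs with more than $c'\sqrt{k}\,n$ edges for the given $k$ — and since the maximal fan-planar density is a \emph{fixed} $5n-10$, achieve arbitrarily large effective density not by a single graph but by combining the counting bound on crossings of a fan-planar \emph{drawing} (linear) with the crossing inequality on a \emph{sufficiently dense} fan-planar graph; the contradiction $cn \le \alpha n$ is vacuous, so the real step is (3'): take the extremal family and amplify — replace each vertex by a clique or take many copies joined along edges as in Lemma~\ref{le:lowerbound}, producing fan-planar graphs of bounded degree but with a separator structure forcing crossing number $\Omega(n \log n)$, which eventually beats the linear $kn$ bound for every fixed $k$. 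The hardest part is making this amplification explicit while preserving fan-planarity; I expect to lean on the $K_5$-chain construction of Lemma~\ref{le:lowerbound} and a bisection-width lower bound argument.
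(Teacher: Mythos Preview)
Your proposal never reaches a proof; it is a sequence of attempts, each of which you correctly diagnose as failing. The central error is the claim that ``a fan-planar drawing on $n$ vertices has $O(n)$ crossings, say at most $\alpha n$.'' This is false: in a fan-planar drawing a single edge may be crossed by an entire fan, i.e.\ by up to $n-1$ edges, so the total number of crossings can be $\Theta(n^2)$. Once you assume linear crossings, every comparison you make collapses to ``linear versus linear'' and no contradiction emerges---which is exactly the wall you keep running into. The density route (comparing $5n-10$ with the Pach--T\'oth $O(\sqrt{k}\,n)$ bound) is also doomed for the reason you note: both slopes are constants, and $5$ loses to $\sqrt{k}$ for large $k$.

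The paper's argument exploits precisely the phenomenon you ruled out. It exhibits a single explicit family, the complete tripartite graphs $K_{1,3,h}$, that are fan-planar for all $h$ (an easy picture: the three middle vertices sit so that every edge crossed is crossed only by edges of one fan), have $\Theta(h)$ vertices and edges, but have \emph{quadratic} crossing number---the exact value $cr(K_{1,3,h})=2\lfloor h/2\rfloor\lfloor (h-1)/2\rfloor+\lceil h/2\rceil$ is known. Setting $h=4k+2$ gives $cr=8k^2+6k+1$, whereas a $k$-planar drawing of a graph with $m=16k+11$ edges has at most $km/2=8k^2+\tfrac{11}{2}k$ crossings; since $6k+1>\tfrac{11}{2}k$ for $k\ge1$, $K_{1,3,4k+2}$ is not $k$-planar. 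No crossing-number inequality, no amplification, no bisection width: just one fan-planar graph with superlinear crossing number compared against the trivial $km/2$ bound. Your brief aside about ``fan-planar graphs with $\Theta(n)$ edges but crossing number $\omega(n)$'' was the right thread; you needed to pull it rather than return to density arguments.
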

\begin{proof}
Consider the complete $3$-partite graph $K_{1,3,h}$. It is easy to see that this graph is fan-planar for every $h \geq 1$ (see Figure~\ref{fi:k13h}). It is known that $cr(K_{1,3,h})=2 \left \lfloor \frac{h}{2} \right\rfloor  \left \lfloor \frac{h-1}{2} \right\rfloor + \left \lceil \frac{h}{2} \right\rceil$~\cite{JGT:JGT3190100102,s-gcnvs-13}. If we choose $h=4k+2$, we have $cr(K_{1,3,4k+2})=2 \left \lfloor \frac{4k+2}{2} \right\rfloor  \left \lfloor \frac{4k+1}{2} \right\rfloor + \left \lceil \frac{4k+2}{2} \right\rceil=4k(2k+1)+2k+1=8k^2+6k+1$. Thus, in every drawing of $K_{1,3,4k+2}$ there are at least $8k^2+6k+1$ crossings. On the other hand, in a $k$-planar drawing there can be at most $\frac{km}{2}$ crossings, where $m$ is the number of edges in the drawing. Since $K_{1,3,4k+2}$ has $16k+11$ edges, in order to be $k$-planar it should admit a drawing with at most $\frac{km}{2} = \frac{k(16k+11)}{2}=8k^2+\frac{11}{2}k$ crossings. Since $6k+1 > \frac{11}{2}k$ for every $k \geq 1$, $K_{1,3,4k+2}$ is not $k$-planar.\qed
\end{proof}

\begin{figure}
\centering
\subfigure[]{\label{fi:k13h}\includegraphics[scale=0.9]{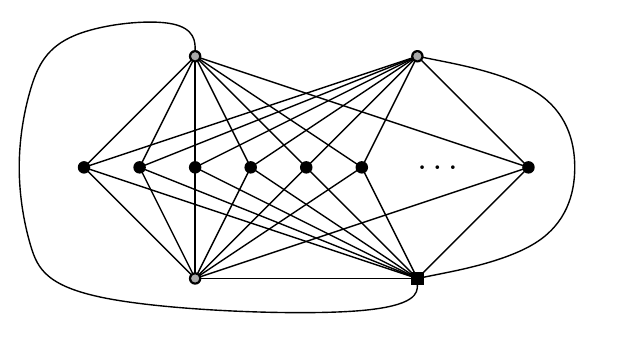}}
\subfigure[]{\label{fi:k7-fan-a}\includegraphics[scale=0.35]{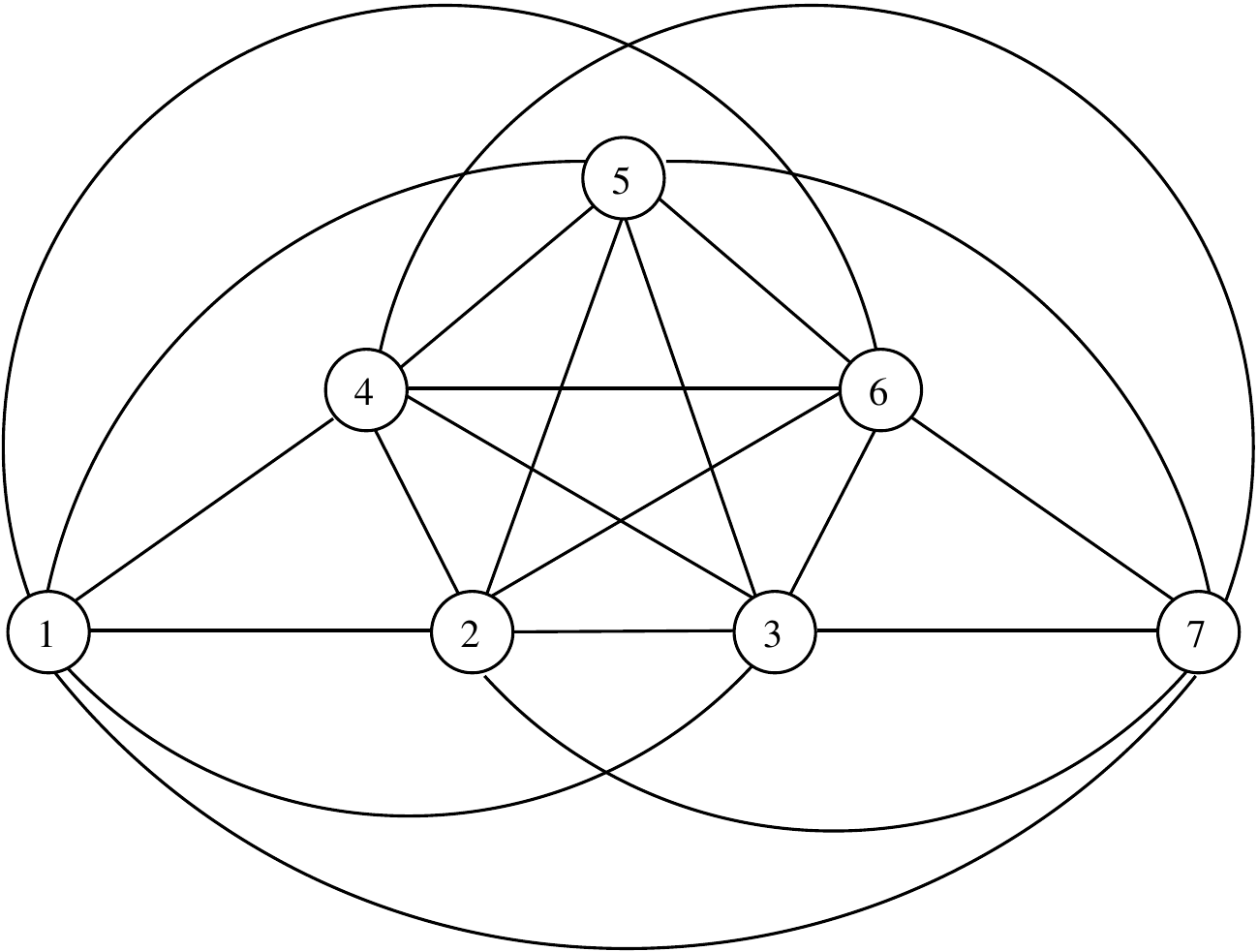}}
\hspace{0.2cm}
\subfigure[]{\label{fi:k7-fan-b}\includegraphics[scale=0.35]{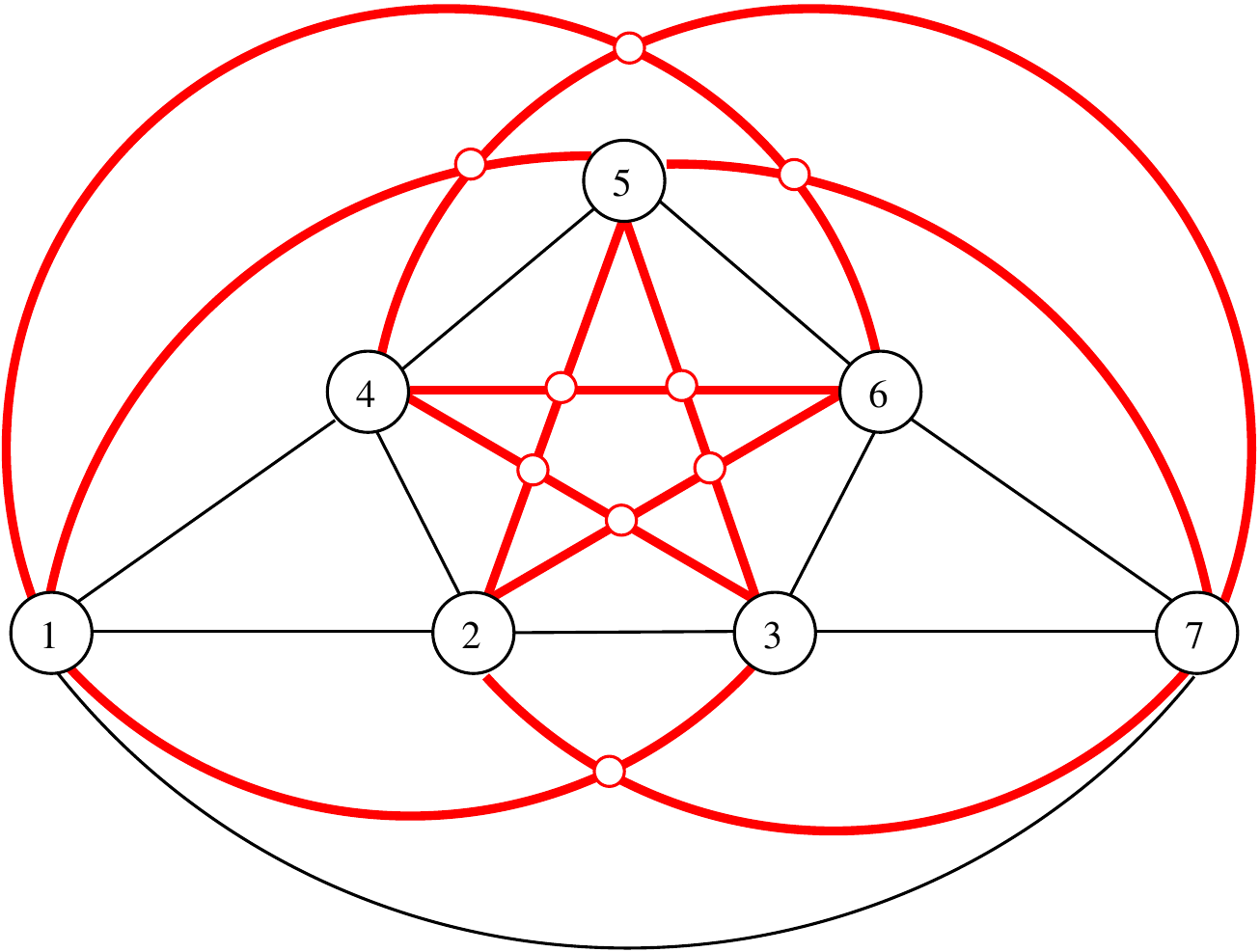}}
\caption{(a) A fan-planar drawing of $K_{1,3,h}$. (b) A fan-planar drawing of the $K_7$ graph. (c) The fragments of the fan-planar drawing in (a) are highlighted. }
\end{figure}

%\begin{figure}
%\centering
%\subfigure[]{\label{fi:k13h}\includegraphics[scale=1]{figures/k13h}}
%\subfigure[]{\label{fi:kplanar-non-fanplanar-1}\includegraphics[scale=1,page=6]{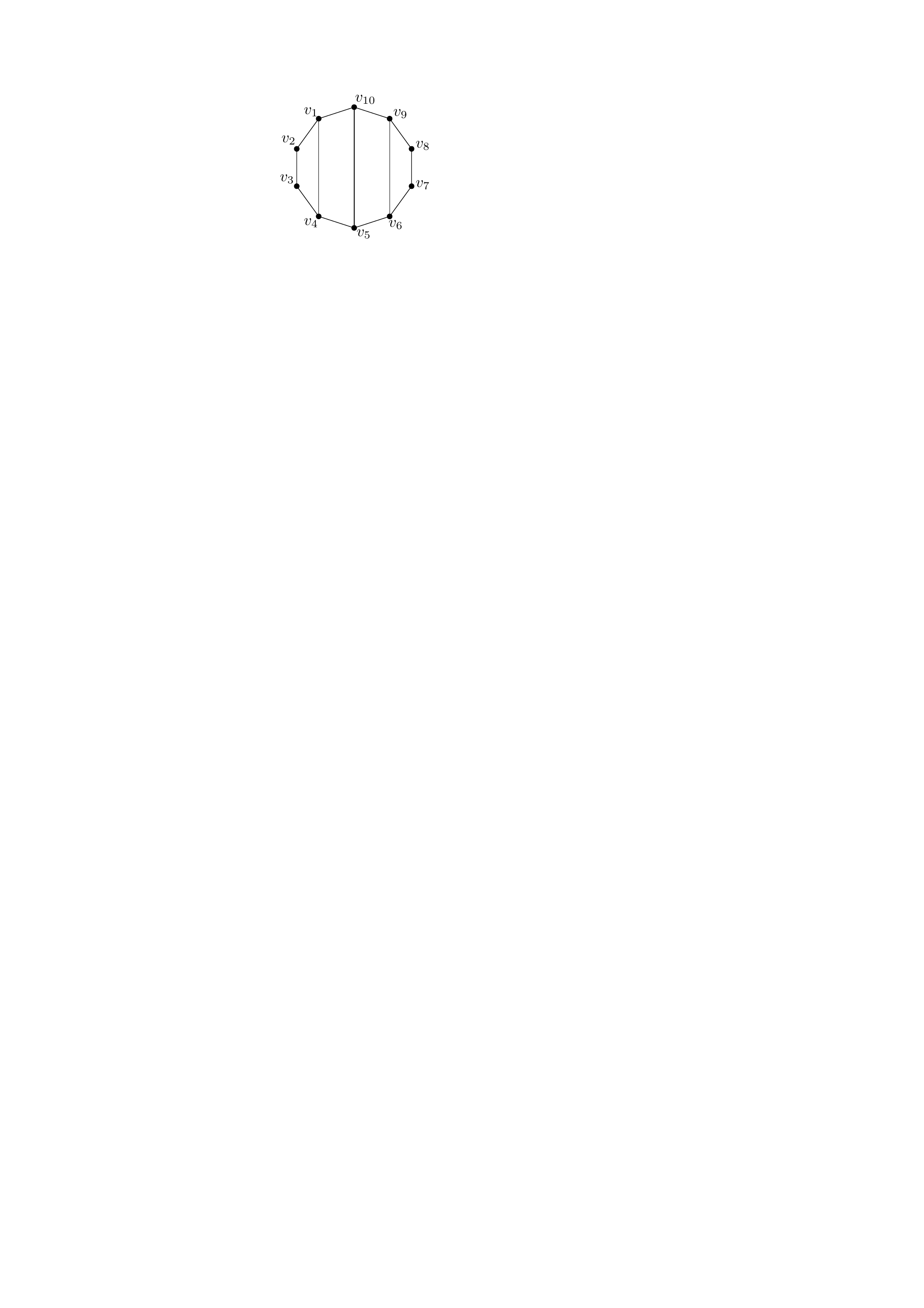}}
%\subfigure[]{\label{fi:kplanar-non-fanplanar-2}\includegraphics[scale=1,page=5]{figures/kplanar-non-fanplanar}}
%\subfigure[]{\label{fi:kplanar-non-fanplanar-3}\includegraphics[scale=1,page=4]{figures/kplanar-non-fanplanar}}
%\caption{(a) A $2$-planar drawing of $K_{1,3,h}$. (b)--(d) Illustration for the proof of Theorem~\ref{th:k-notfan}: (b) the %graph $G'$; (c) the graph $G''$; (d) the graph $G$. }
%\end{figure}

In order to prove that for any $k > 1$ there exist $k$-planar graphs that are not fan-planar (Theorem~\ref{th:k-notfan}), we need to prove a preliminary technical result (Lemma~\ref{le:k7}), which will be also reused in Section~\ref{se:hardness}. 

Let $\Gamma$ be a fan-planar drawing of a graph. We may regard crossed edges of $\Gamma$ as composed by \emph{fragments}, where a fragment is the portion of the edge that is between two consecutive crossings or between one of the two end-vertices of the edge and the first crossing encountered while moving along the edge towards the other end-vertex. An edge that is not crossed does not have any fragment. Figure~\ref{fi:k7-fan-a} shows a fan-planar drawing of the $K_7$ graph and Figure~\ref{fi:k7-fan-b} shows the fragments of the drawing in Figure~\ref{fi:k7-fan-a}. We consider two fragments \emph{adjacent} if they share a common crossing or a common end-vertex. The following lemma shows an interesting property of the fragments of any fan-planar drawing of the $K_7$ graph.

%\begin{figure}[htb]
%\centering
%\subfigure[]{\label{fi:k7-fan-a}\includegraphics[scale=0.4]{figures/k7-fan}}
%\hspace{1cm}
%\subfigure[]{\label{fi:k7-fan-b}\includegraphics[scale=0.4]{figures/k7-fan-fragments}}
%\caption{(a) A fan-planar drawing of the $K_7$ graph. (b) The fragments of the fan-planar drawing in (a) are highlighted.}
%\end{figure}

\begin{lemma}\label{le:k7}
In any fan-planar drawing of the $K_7$ graph, any pair of vertices is joined by a sequence of adjacent fragments. 
\end{lemma}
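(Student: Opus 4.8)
The plan is to argue by contradiction: suppose that in some fan-planar drawing $\Gamma$ of $K_7$ the vertex set can be partitioned into two nonempty sets $A$ and $B$ such that no fragment-path connects a vertex of $A$ to a vertex of $B$. Consider the $7$-vertex planar enhancement $\Gamma'$ of $\Gamma$, where each crossing is replaced by a dummy vertex; the fragments of $\Gamma$ are precisely the edges of $\Gamma'$ incident to at least one dummy vertex together with the uncrossed edges of $\Gamma$. ``Joined by a sequence of adjacent fragments'' means belonging to the same connected component of the subgraph of $\Gamma'$ induced by these fragment-edges. So the assumption says that the fragments split into two groups, one touching only vertices of $A$ (and some dummy vertices), the other touching only vertices of $B$ (and some dummy vertices), with the two groups sharing no real vertex and no crossing.

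The first step is to exploit the density/structure forced by this separation. Every one of the $\binom{|A|}{1}\cdot|B| = |A||B|$ edges of $K_7$ between $A$ and $B$ must be drawn in $\Gamma$, and each such edge, being a path of fragments from $A$ to $B$, would by itself connect $A$ and $B$ through fragments unless\,---\,contradiction\,---\,it has \emph{no} crossings at all and is a single uncrossed edge. Hence \emph{every} $A$--$B$ edge is uncrossed in $\Gamma$. Thus the complete bipartite graph between $A$ and $B$ is drawn without any crossings on its own edges, and moreover (again by the separation hypothesis) no $A$--$A$ fragment and no $B$--$B$ fragment may cross an $A$--$B$ edge, since a crossing would merge an $A$-side fragment with a $B$-side fragment. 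So the $A$--$B$ edges are uncrossed in the whole drawing $\Gamma$, i.e.\ $K_{|A|,|B|}$ is drawn planarly inside $\Gamma$, and it separates the $A$-fragments from the $B$-fragments in the plane.

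The second step is a case analysis on $(|A|,|B|)\in\{(1,6),(2,5),(3,4)\}$, deriving a contradiction in each case from the fact that a planarly drawn $K_{|A|,|B|}$ partitions the plane into faces whose boundaries alternate $A$- and $B$-vertices, and each of $K_7[A]$ and $K_7[B]$ must be drawn inside the faces, using only crossings among edges on its own side. For $(3,4)$: $K_{3,4}$ is nonplanar, so this case is impossible outright. For $(2,5)$: a planar $K_{2,5}$ has all faces of length $4$ bounded by two $A$-vertices and two $B$-vertices; the five vertices of $B$ form $K_5$, which must be drawn fan-planarly using only crossings among $B$--$B$ edges, but $K_5$ is nonplanar so some $B$--$B$ edge is crossed, yet $K_5$ drawn with all vertices inside faces of a planar $K_{2,5}$ cannot route its edges\,---\,this needs the observation that the two $A$-vertices separate the cyclic order of the $B$-vertices, forcing a $B$--$B$ edge either to cross an $A$--$B$ edge (forbidden) or to be ``blocked'', a routing/Jordan-curve contradiction. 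For $(1,6)$: the single vertex $a\in A$ has six uncrossed edges to $B$, partitioning a disk-neighborhood of $a$ into six sectors; the $K_6$ on $B$ must be drawn with each of its $15$ edges inside one sector (it cannot cross the spokes), but a sector contains only two of the $B$-vertices on its boundary, so only $6$ of the $15$ $B$--$B$ edges can be drawn, contradiction.

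The main obstacle I anticipate is the $(2,5)$ and $(1,6)$ routing arguments: making rigorous that ``an $A$--$B$ edge cannot be crossed by a same-side fragment'' genuinely blocks the non-planar same-side $K_5$ or $K_6$ from being realized, i.e.\ turning the intuitive Jordan-curve obstruction into a clean argument that the cyclic separation of $B$ by the $A$-vertices is incompatible with drawing $K_{|B|}$ on the $B$-side only. A cleaner route may be to combine this with Property~\ref{pr:3-mutually} and the edge-count $21$ of $K_7$ against the fragment structure, or to invoke that a fan-planar drawing of $K_7$ must have at least a certain number of crossings so the two sides cannot both be ``small'', but the disk-sector counting for $(1,6)$ and the nonplanarity of $K_{3,4}$ already dispose of two of the three cases, so the bulk of the care goes into $(2,5)$.
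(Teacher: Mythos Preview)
Your overall contradiction strategy---assume a partition $A\sqcup B$, deduce that every $A$--$B$ edge must be completely uncrossed, hence the complete bipartite graph $K_{|A|,|B|}$ sits planarly in $\Gamma$ with nothing crossing it---is sound and is a different route from the paper's. The $(3,4)$ case is correctly killed by the nonplanarity of $K_{3,4}$, and the $(2,5)$ case (which you flag as the hard one) is in fact easy: a planar $K_{2,5}$ has exactly five $4$-faces, each containing only two cyclically consecutive $b$-vertices, so only five of the ten $B$--$B$ edges can be routed at all. You should also drop the clause ``together with the uncrossed edges of $\Gamma$'' from your description of fragments; the paper explicitly excludes uncrossed edges, and your own argument relies on this.

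There is, however, a genuine gap in your $(1,6)$ case. Six uncrossed arcs emanating from a single vertex $a$ do \emph{not} partition the plane into six sectors: a star is a tree, so its planar complement is a single face. All six $b$-vertices lie on the boundary of that one face, and $B$--$B$ edges may be routed freely within it and may cross one another (subject only to fan-planarity). Your count ``only $6$ of the $15$ $B$--$B$ edges can be drawn'' is therefore wrong. The correct way to finish this case is to observe that the induced drawing of $K_6$ on $B$ has all six vertices on one face, i.e.\ it is outer fan-planar, which contradicts Lemma~\ref{le:3n-5} since $15>3\cdot 6-5$.

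Once you patch $(1,6)$ this way, note that you have essentially rediscovered the paper's key step, and the paper leverages it far more efficiently. The paper's argument is: every vertex $v$ must be incident to some fragment, for otherwise all six edges at $v$ are uncrossed and deleting $v$ yields an outer fan-planar $K_6$, contradicting Lemma~\ref{le:3n-5}. Then, since a crossed edge and the edge crossing it are independent, the four endpoints are distinct and all lie in the same fragment-component; hence every fragment-component has size at least four. Seven vertices cannot be split into two nonempty parts each of size at least four, so there is a single component. This replaces your entire case analysis with two sentences.
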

\begin{proof}
Consider a fan-planar drawing $\Gamma$ of the $K_7$ graph and consider any vertex $v_i$ of it. Vertex $v_i$ must be incident to some fragment in $\Gamma$. Indeed, if vertex $v_i$ had no incident fragment, all the edges incident to $v_i$ were uncrossed in $\Gamma$, and removing $v_i$ and all its incident edges from $\Gamma$ would yield a fan-planar drawing of the $K_6$ graph where all vertices are on the same face; this would clearly imply the existence of a fan-planar drawing of $K_6$ where all vertices are on the outer face, i.e., an outer fan-planar drawing of $K_6$. This is however impossible by Lemma~\ref{le:3n-5} ($K_6$ has $6$ vertices and $15$ edges, i.e., more than $3\cdot6-5$ edges). Since a fragment is originated by a crossed edge and since two crossing edges are not adjacent, we have that vertex $v_i$ is linked by a sequence of fragments to at least other three distinct vertices. Therefore, the vertices of $K_7$ are linked by sequences of fragments in groups of at least four. Being seven vertices in total, this implies that all vertices of $K_7$ are linked together by sequences of fragments.\qed
\end{proof}

\begin{figure}


\centering
\subfigure[]{\label{fi:kplanar-non-fanplanar-1}\includegraphics[scale=1,page=6]{figures/kplanar-non-fanplanar}}
\subfigure[]{\label{fi:kplanar-non-fanplanar-2}\includegraphics[scale=1,page=5]{figures/kplanar-non-fanplanar}}
\subfigure[]{\label{fi:kplanar-non-fanplanar-3}\includegraphics[scale=1,page=4]{figures/kplanar-non-fanplanar}}
\caption{(a)--(c) Illustration for the proof of Theorem~\ref{th:k-notfan}: (a) the graph $G'$; (b) the graph $G''$; (c) the graph $G$. }
\end{figure}

\begin{theorem}\label{th:k-notfan}
For every integer $k>1$ there exists a graph that is $k$-planar but not fan-planar.
\end{theorem}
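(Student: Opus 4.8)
The plan is to realize the target graph $G=G_k$ as a network of copies of $K_7$, following the progression $G'$ (Figure~\ref{fi:kplanar-non-fanplanar-1}), $G''$ (Figure~\ref{fi:kplanar-non-fanplanar-2}), $G$ (Figure~\ref{fi:kplanar-non-fanplanar-3}): $G'$ is a single $K_7$, $G''$ combines a constant number of copies of $K_7$ glued along shared vertices and edges to a central one, and $G=G_k$ attaches a further part whose size grows with $k$. One then argues separately that \emph{(A)} $G_k$ admits a $k$-planar drawing, and \emph{(B)} $G_k$ admits no fan-planar drawing; together these prove the statement.

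For \emph{(A)} I would exhibit a drawing explicitly. Since $K_7$ is fan-planar (Figure~\ref{fi:k7-fan-a}) and, after a routine local redrawing, can be drawn with a bounded number of crossings per edge, each copy of $K_7$ in $G_k$ is placed inside a private disk of the plane and contributes only $O(1)$ crossings per edge. The only edges of $G_k$ leaving a single copy are the few interconnecting edges and the edges of the growing part, and these are routed so that each is crossed at most $k$ times --- possible precisely because the allowed number of crossings per edge grows with $k$. This shows $G_k$ is $k$-planar. With a little extra care the interconnecting edges can be made to genuinely require $k$ crossings, so that $G_k$ is not even $(k-1)$-planar; this is not needed for the statement but makes the incomparability with the $k$-planar hierarchy as sharp as the complementary Theorem~\ref{th:fan-notk}.

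For \emph{(B)}, which I expect to be the crux, I would argue by contradiction. Let $\Gamma$ be a fan-planar drawing of $G_k$ and focus on one distinguished copy $H\cong K_7$. By Lemma~\ref{le:k7}, in $\Gamma[H]$ every two of the seven vertices of $H$ are joined by a sequence of adjacent fragments, so in the planarization of $\Gamma[H]$ the crossed edges of $H$ form a connected plane subnetwork $F$ that meets all seven vertices. Since $F$ is connected and spans $V(H)$, any vertex of $G_k$ adjacent to several vertices of $H$, and any edge leaving $H$, is confined to a single face of $F$, and such a face is already bounded by fragments of two \emph{independent} edges of $H$. Tracing how the growing part of $G_k$ must be routed relative to these faces, one of its edges is forced to cross two independent fragments --- hence two independent edges of $H$ --- contradicting fan-planarity.

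The main obstacle is carrying out this last step uniformly over \emph{all} fan-planar drawings of $G_k$: a single $K_7$ has many combinatorially distinct fan-planar drawings, so the ``pocket'' into which the rest of $G_k$ is pushed is not fixed in advance. The remedy is to make the interconnection pattern of $G_k$ rigid enough --- enough shared vertices and edges among the $K_7$ copies --- that Lemma~\ref{le:k7} alone already determines which regions are reachable in every drawing, leaving no legal route for the growing part; settling on this pattern and verifying that it admits no escape is where the real work lies. For completeness I would also note that for large $k$ one can alternatively invoke edge counts, since $k$-planar graphs may exceed the $5n-10$ edge bound that holds for fan-planar graphs, so dense enough $k$-planar graphs are automatically not fan-planar; the $K_7$-based construction is what is needed for the remaining small values of $k$.
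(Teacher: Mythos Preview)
Your proposal has two substantive gaps. First, you construct a $k$-dependent family $G_k$ with a ``growing part'', but this is unnecessary: since every $2$-planar graph is $k$-planar for all $k\ge 2$, it suffices to exhibit a \emph{single} $2$-planar graph that is not fan-planar, and that one graph witnesses the theorem for every $k>1$. This is exactly what the paper does. Second, your reading of Figures~\ref{fi:kplanar-non-fanplanar-1}--\ref{fi:kplanar-non-fanplanar-3} is off: $G'$ is not a $K_7$ but the $10$-cycle $C=(v_1,\dots,v_{10})$ together with three chords $(v_1,v_4),(v_5,v_{10}),(v_6,v_9)$; $G''$ replaces every edge of $G'$ by a copy of $K_7$ identified only at the two endpoints; and $G$ adds to $G''$ the four edges $(v_1,v_7),(v_2,v_6),(v_3,v_9),(v_4,v_8)$. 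A $2$-planar drawing of $G$ is then read off directly from the figures.

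Your sketch of non-fan-planarity also stops short of an argument. You rightly invoke Lemma~\ref{le:k7} to obtain a spine of fragments through each $K_7$ in any fan-planar drawing, but ``tracing how the growing part must be routed \dots\ is where the real work lies'' is precisely the step that needs doing, and your single-$K_7$ face-confinement idea does not by itself pin anything down. In the paper this step is concrete and global: the spines along $C$ form a closed curve $C'$ that no other edge or spine may cross, so each of the three chord-spines and each of the four added edges lies wholly inside or wholly outside $C'$. A short side-assignment argument then forces the three chord-spines to one side of $C'$ and all four added edges to the other; but on that side $(v_1,v_7)$ must cross both $(v_3,v_9)$ and $(v_4,v_8)$, which are independent, contradicting fan-planarity. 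This Jordan-curve inside/outside mechanism across the whole cycle of $K_7$'s, rather than a local analysis around one copy, is what makes the proof go through.
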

\begin{proof}
Since every $2$-planar graph is also a $k$-planar graph, for $k>1$, it is sufficient to prove  
that there exists a $2$-planar graph that is not fan-planar. 
Let $G'$ be a graph consisting of a cycle $C=(v_1,v_2,\dots, v_{10})$ and the three edges $(v_1,v_4)$, $(v_5,v_{10})$, and $(v_6,v_9)$ (see Figure~\ref{fi:kplanar-non-fanplanar-1}). Let $G''$ be the graph obtained from $G'$ by replacing each edge $(v_i,v_j)$ ($1 \leq i,j \leq 10$) with a copy of the $K_7$ graph (whose vertices are denoted as $u_1,u_2,\dots,u_7$) so that $v_i=u_1$ and $v_j=u_7$ (see Figure~\ref{fi:kplanar-non-fanplanar-2}). The copy of $K_7$ that replaces $(v_i,v_j)$ will be denoted as $K^{i,j}_7$. Let $G$ be the graph obtained from $G''$ by adding the four edges $(v_1,v_7)$, $(v_2,v_6)$, $(v_3,v_9)$, and $(v_4,v_8)$ (see Figure~\ref{fi:kplanar-non-fanplanar-3}). Graph $G$ is $2$-planar. Namely, planarly embed $G'$ as shown in Figure~\ref{fi:kplanar-non-fanplanar-1}. Construct a drawing $\Gamma$ of $G$ by replacing each edge of $G'$ with a drawing of $K^{i,j}_7$ like the one shown in Figure~\ref{fi:k7-fan-a} (see Figure~\ref{fi:kplanar-non-fanplanar-2}), and 
draw the four edges $(v_1,v_7)$, $(v_2,v_6)$, $(v_3,v_9)$, and $(v_4,v_8)$ inside the cycle $C$ as shown in Figure~\ref{fi:kplanar-non-fanplanar-3}. Drawing $\Gamma$ is clearly $2$-planar.

We now prove that $G$ is not fan-planar. Suppose by contradiction that $G$ has a fan-planar drawing $\Gamma$. By Lemma~\ref{le:k7}, for each $K^{i,j}_7$ ($1 \leq i,j \leq 10$) there is a sequence of fragments leading from $v_i=u_1$ to $v_j=u_7$; call such a sequence of fragments the \emph{spine} of $K_7^{i,j}$.  Delete from $\Gamma$ all fragments except those in the spine of each $K_7^{i,j}$; delete also all non-crossed edges and isolated vertices. The remaining drawing $\Gamma'$ is a planar drawing, because each spine cannot be crossed by any other fragment or edge, otherwise the drawing is no longer fan-planar. We denote by $C'$ the cycle of spines corresponding to $C$, by $S$ the set of spines of $K_7^{1,4}$, $K_7^{5,10}$, and $K_7^{6,9}$, and by $F$ the set of edges $(v_1,v_7)$, $(v_2,v_6)$, $(v_3,v_9)$, and $(v_4,v_8)$. Since $\Gamma'$ is planar each spine in $S$ is either inside $C'$ or outside $C'$ in $\Gamma'$, and therefore in $\Gamma$. Furthermore, since the edges of $F$ cannot cross the spine of any $K_7^{i,j}$ ($1 \leq i,j \leq 10$) in $\Gamma$, each of them must be either inside or outside $C'$ in $\Gamma$. Given two elements of $S \cup F$ we say that they are on the \emph{same side} of $C'$ if they are both inside or both outside $C'$ in $\Gamma$, otherwise we say that they are on \emph{opposite sides} of $C'$. 
Since there cannot be a crossing between an element of $F$ and one of $S$, each of the two edges $(v_2,v_6)$ and $(v_3,v_9)$ must be on the opposite side of $C'$ with respect to $K_7^{1,4}$. Analogously, each of the two edges $(v_1,v_7)$ and $(v_4,v_8)$ must be on the opposite side of $C'$ with respect to $K_7^{6,9}$. Finally, $K_{7}^{5,10}$ must be on the opposite side of $C'$ with respect to $(v_1,v_7)$, $(v_2,v_6)$, $(v_3,v_9)$, and $(v_4,v_8)$. It follows that the spines of $S$ and the edges of $F$ must be on opposite sides of $C'$, which implies that each edge in $F$ is crossed by two independent edges (see Figure~\ref{fi:kplanar-non-fanplanar-3}), a contradiction.\qed
\end{proof}
\section{Complexity of the Fan-planarity Testing Problem}\label{se:hardness}

In this section, we exploit the results of Section~\ref{se:outerfan} and Section~\ref{se:2planar} to prove that testing whether a graph is fan-planar in the variabile embedding setting is NP-complete. We simply refer to this problem as the \emph{fan-planarity testing}. Our proof is based on a reduction from \emph{1-planarity testing}, which is known to be NP-complete in the variable embedding setting~\cite{DBLP:journals/algorithmica/GrigorievB07,DBLP:journals/jgt/KorzhikM13}. The 1-planarity testing asks whether a given graph admits a 1-planar drawing. We prove the following. 

\begin{theorem}\label{th:hardness}
Fan-planarity testing is NP-complete. 
\end{theorem}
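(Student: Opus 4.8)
The plan is to prove NP-completeness by establishing membership in NP and then giving a polynomial-time reduction from 1-planarity testing in the variable embedding setting. Membership in NP is routine: given a graph $G$, a certificate is a combinatorial description of a fan-planar drawing (a rotation system together with, for each edge, the ordered list of edges it crosses). One verifies in polynomial time that the induced planar enhancement is a valid planar embedding and that no forbidden configuration of type $(a)$ or $(b)$ occurs; the number of crossings per edge can be bounded polynomially (in fact by $O(n)$, since an edge crossed by edges of a single fan crosses at most $\deg$-many of them), so the certificate has polynomial size. The substance of the theorem is NP-hardness.

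For hardness, the plan is to transform an instance $H$ of 1-planarity testing into a graph $\phi(H)$ that is fan-planar if and only if $H$ is 1-planar. The key gadget is the copy of $K_7$ used in Section~\ref{se:2planar}: by Lemma~\ref{le:k7}, in any fan-planar drawing of $K_7$ every pair of its vertices is connected by a spine (a sequence of adjacent fragments) that cannot be crossed by anything outside the $K_7$. Thus replacing an edge $(u,v)$ of $H$ by a copy $K_7^{u,v}$ with $u$ and $v$ identified with two fixed vertices of the $K_7$ forces, in any fan-planar drawing of $\phi(H)$, an uncrossable ``thick edge'' between $u$ and $v$ behaving essentially like an edge that must be drawn without crossings, exactly as in the proof of Theorem~\ref{th:k-notfan}. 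First I would apply this substitution to every edge of $H$. Then any fan-planar drawing of $\phi(H)$ restricted to the spines yields a planar drawing of (a subdivision-like copy of) $H$, which is too strong — it would force $H$ itself to be planar. So the reduction needs a second ingredient that allows exactly one crossing per edge of $H$: I would add, in parallel to each $K_7^{u,v}$, an additional ``crossing slot'' — a short path or a light subgraph through a new vertex — that can be crossed once, and wire the gadgets so that the only way to realize $\phi(H)$ fan-planarly is to route these slots across one another in a pattern that mirrors a 1-planar drawing of $H$. The density bounds of Section~\ref{se:outerfan} (used via Lemma~\ref{le:k7}) guarantee the rigidity of the $K_7$ parts, while the crossing slots supply the one degree of freedom per edge that 1-planarity allows.

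The two directions of correctness are then: if $H$ is 1-planar, take a 1-planar drawing of $H$, thicken each edge into the prescribed $K_7$ gadget plus crossing slot (each edge of $H$ has at most one crossing, which is absorbed by the slot, and a fan-planar drawing of $K_7$ exists by Figure~\ref{fi:k7-fan-a}), obtaining a fan-planar drawing of $\phi(H)$. Conversely, from a fan-planar drawing of $\phi(H)$, extract the spines to get a planar structure; the uncrossability of spines forces every crossing to occur between the crossing slots, and a counting/parity argument — the same style as the contradiction at the end of the proof of Theorem~\ref{th:k-notfan}, where an edge ends up crossed by two independent edges — shows that each slot is crossed at most once, yielding a 1-planar drawing of $H$. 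The construction is clearly polynomial since each edge is replaced by a constant-size gadget.

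The main obstacle I anticipate is designing the crossing slots and the wiring so that they permit \emph{exactly one} crossing and crucially do not introduce spurious fan-planar drawings that do not correspond to 1-planar drawings of $H$ — in particular ruling out a slot being crossed by two edges of a common fan, which fan-planarity would otherwise allow but 1-planarity would not. Handling this likely requires attaching each slot to the $K_7$ gadgets in a way that the only fans available to cross it are ``used up,'' or forcing the potential crossing partners of a slot to be mutually independent so that at most one of them may cross it without creating the forbidden configuration $(a)$. Getting this local rigidity right, while keeping the global embedding flexible enough to realize any 1-planar drawing of $H$, is the delicate part of the reduction.
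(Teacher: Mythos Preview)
Your overall architecture matches the paper's: reduce from $1$-planarity, use copies of $K_7$ as rigid ``thick edges'' via Lemma~\ref{le:k7}, and leave one flexible piece per edge of $H$ where the single allowed crossing can occur. The gap is exactly where you locate it yourself: you have not specified a crossing slot that provably admits at most one crossing in a fan-planar drawing. A ``short path or light subgraph through a new vertex'' placed in parallel with a single $K_7^{u,v}$ does not obviously do this; nothing prevents two slots incident to a common vertex of $H$ from forming a fan and both crossing a third slot, which is fan-planar but not $1$-planar. Your hoped-for ``counting/parity argument in the style of Theorem~\ref{th:k-notfan}'' is not supplied, and it is not clear one exists for a generic slot design.

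The paper resolves this with a single clean idea: replace each edge $(u,v)$ of $H$ not by one $K_7$ but by \emph{two} $K_7$'s in series, one attached at $u$ and one at $v$, joined by a single \emph{spanning edge} $(u_7,v_7)$ whose endpoints are fresh vertices internal to that edge's gadget. The spanning edges therefore form a \emph{matching} in $G_f$: no two share an endpoint, so any two are independent. Since spines of the $K_7$'s are uncrossable, the only possible crossings in a fan-planar drawing of $G_f$ are between spanning edges; and because they are pairwise independent, the forbidden configuration~$(a)$ immediately gives that each spanning edge is crossed at most once. That is the missing ingredient---no local rigidity analysis or parity argument is needed, just the observation that a matching of edges is automatically pairwise independent. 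Your forward direction and NP-membership sketch are fine and essentially coincide with the paper's.
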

\begin{proof}
First we prove that the problem is NP-hard and then we prove that it belongs to NP.

Given an instance $G=(V,E)$ of the 1-planarity testing we build an instance $G_f=(V_f,E_f)$ of the fan-planarity testing by replacing each edge $(u,v) \in E$ with two $K_7$ graphs with vertices $u=u_{1}, u_{2}, \dots, u_{7}$ and $v=v_{1}, v_{2}, \dots, v_{7}$, called \emph{attachment gadgets} and joined by a \emph{spanning} edge $(u_7,v_7)$. See Figure~\ref{fi:hardness-reduction} for a schematic illustration of this reduction. $G_f=(V_f,E_f)$ can be constructed in polynomial time, having $|V_f| = |V| + |E| \times 12$ vertices and $|E_f| = |E| \times 43$ edges, where $|E| \times 42$ of them belong to the attachment gadgets and the remaining $|E|$ are spanning edges that join different attachment gadgets.

\begin{figure}
\centering
\subfigure[]{\includegraphics[width=0.32\columnwidth]{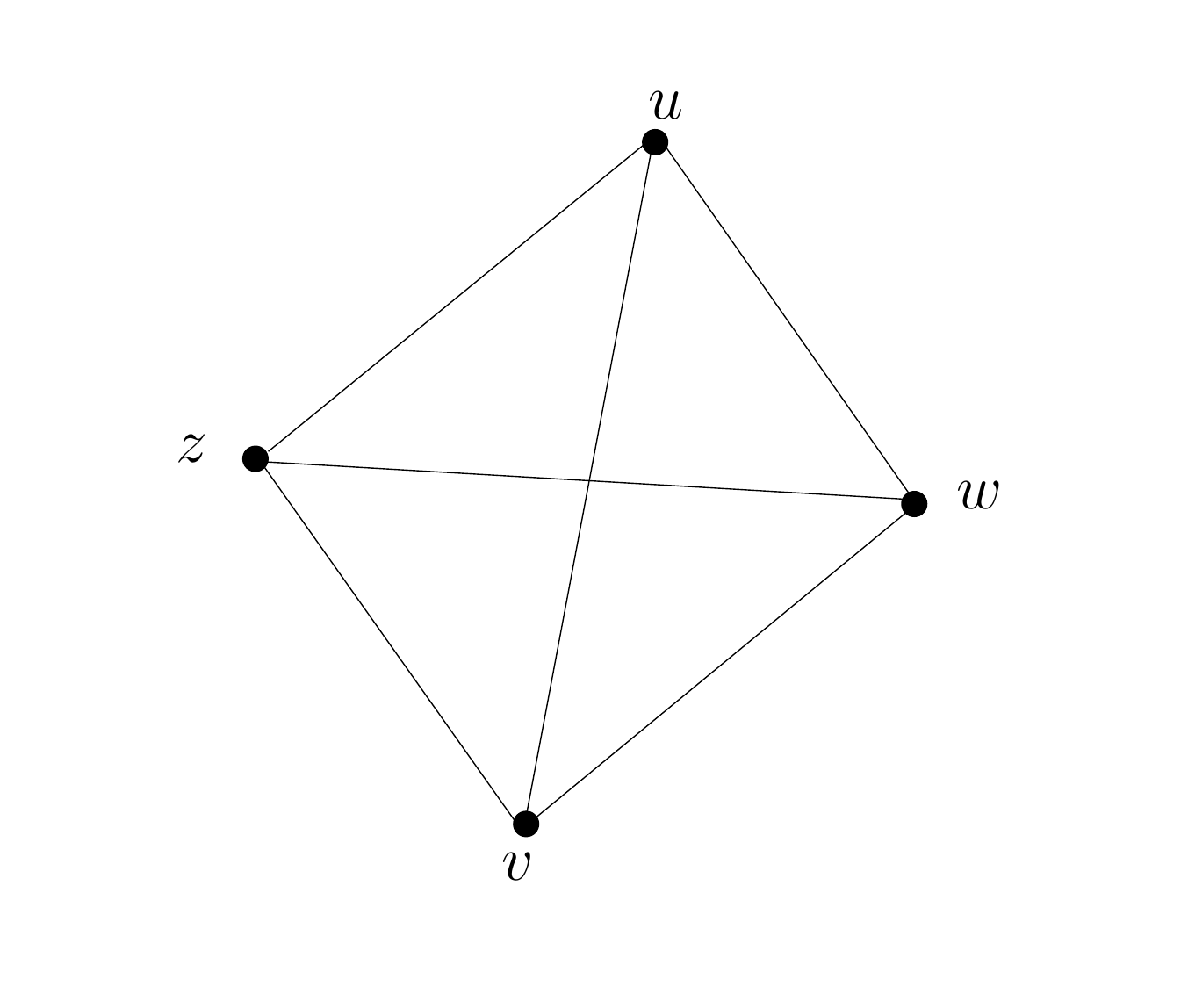}\label{fi:hardness-reduction-a}}
\subfigure[]{\includegraphics[width=0.32\columnwidth]{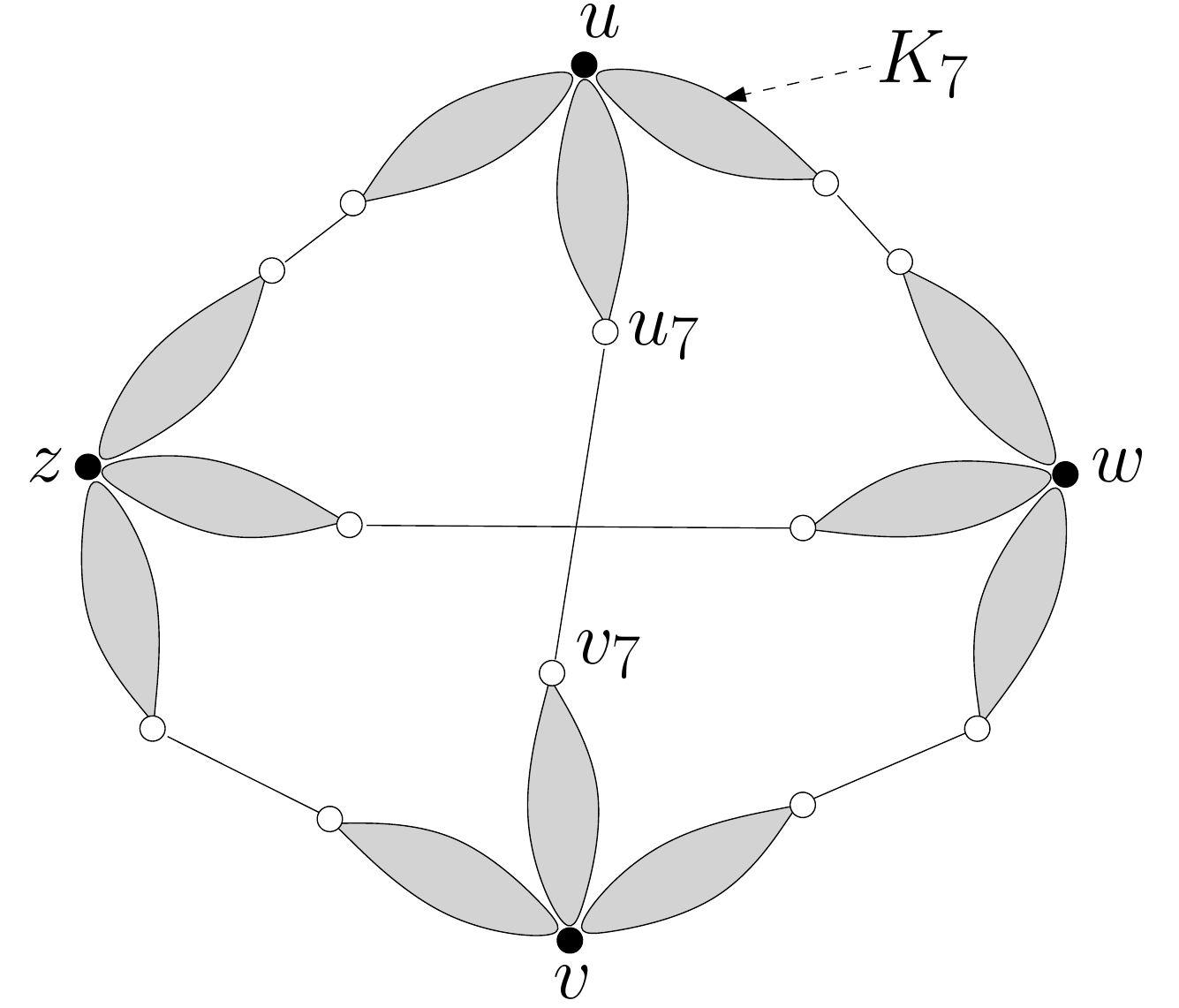}\label{fi:hardness-reduction-b}}
\subfigure[]{\includegraphics[width=0.32\columnwidth]{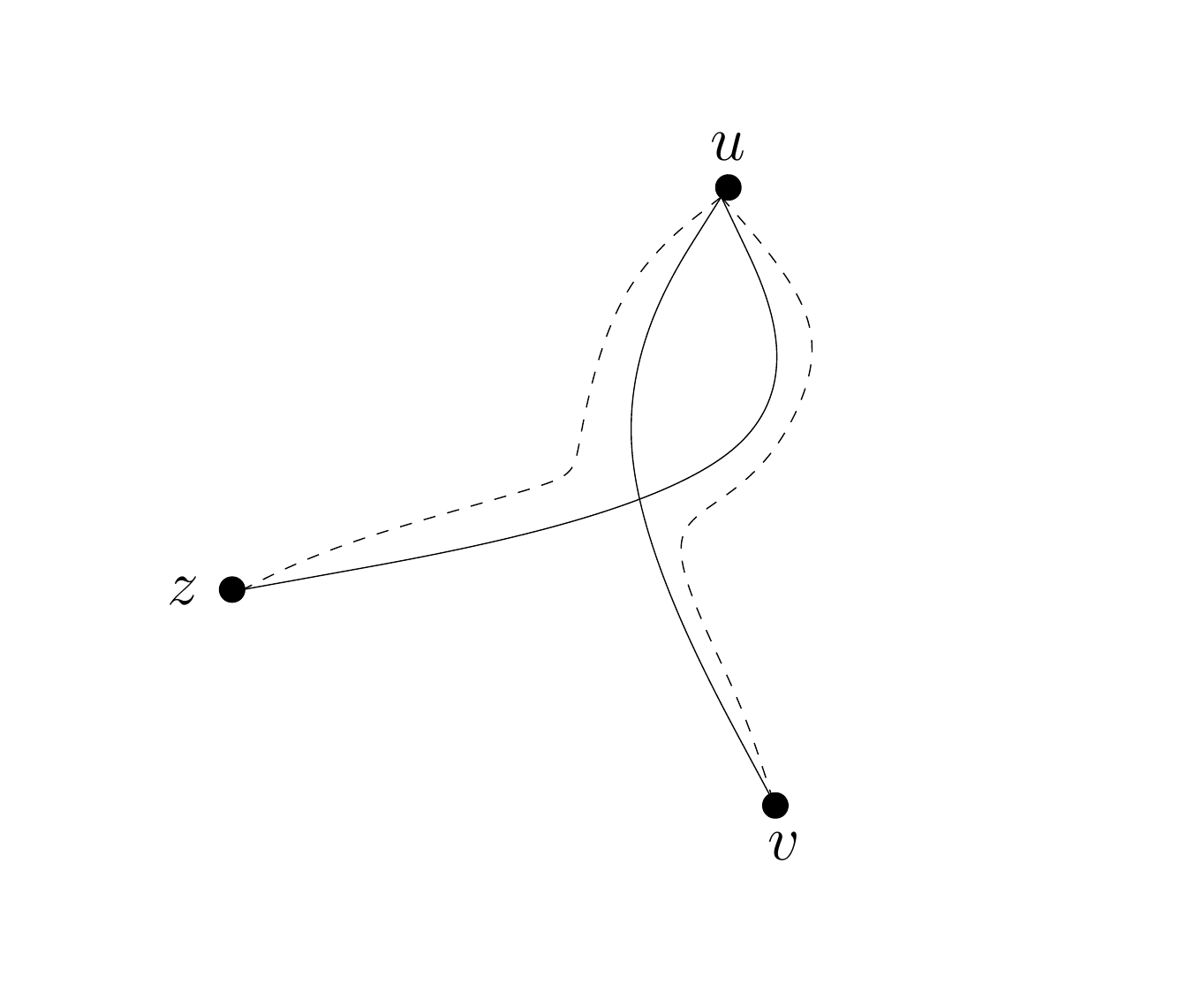}\label{fi:hardness-reduction-c}}
\caption{Illustration of the reduction in the proof of Theorem~\ref{th:hardness}. (a) An instance $G$ of 1-planarity testing; (b) The reduced instance $G_f$ of fan-planarity testing. (c) Two edges of $G$ that are adjacent to a common vertex $u$ and such that they cross one to another in $\Gamma$. Such a crossing can be removed by rerouting the two edges as shown by the dashed paths.  }\label{fi:hardness-reduction}
\end{figure}

We now prove that $G$ is 1-planar if and only if $G_f$ is fan-planar. If $G$ admits a 1-planar drawing, replace each edge $(u,v)$ of $G$ with two fan-planar drawings of $K_7$ like those depicted in Fig.~\ref{fi:k7-fan-a} and with edge $(u_7,v_7)$, in such a way that the possible crossing of $(u,v)$ occurs on $(u_7,v_7)$. The obtained drawing of $G_f$ is fan-planar since each attachment gadget has a fan-planar drawing and each spanning edge has at most one crossing. Conversely, suppose $G_f$ admits a fan-planar drawing $\Gamma_f$. By Lemma~\ref{le:k7}, for any attachment gadget of $G_f$ attached to vertex $u$, there is at least a sequence of fragments leading from $u=u_1$ to $u_7$. As in the proof of Theorem~\ref{th:k-notfan}, call such a sequence of fragments the \emph{spine} of the attachment gadget. Delete from $\Gamma_f$ all fragments except those in the spines. Delete from $\Gamma_f$ all uncrossed edges except the spanning edges. Remove also isolated vertices. A drawing $\Gamma$ of $G$ is obtained, where the drawing of edge $(u,v)$ is given by the spine from $u=u_1$ to $u_7$, the spanning edge $(u_7,v_7)$, and the spine from $v_7$ to $v_1=v$. Observe that, $u \neq v$, as otherwise there would be a self-loop in $G$. We claim that $\Gamma$ is a 1-planar drawing of $G$. Indeed, fragments in the spines can not be crossed by any other fragment or spanning edge of $\Gamma_f$. It follows that spanning edges can cross only among themselves in $\Gamma_f$. However, they can cross only once, as they are a matching of $G_f$ and $\Gamma_f$ is fan-planar. Hence, $\Gamma$ is a 1-planar drawing, but not necessarily simple; indeed, it may happen that two crossing edges $(u,v)$ and $(w,z)$ in $\Gamma$ share an end-vertex, say $u=w$ (this happens when in $\Gamma_f$ there are two crossing spanning edges of two $K_7$ attached to $u$). However, the crossing between $(u,v)$ and $(u,z)$ in $\Gamma$ can be easily removed by rerouting the two edges as shown in Fig.~\ref{fi:hardness-reduction-c}.

We now prove that the fan-planarity testing is in NP. A non-deterministic Turing machine can be devised to test if a graph $G$ with $n$ vertices and $m$ edges admits a fan-planar drawing. The main strategy consists in exploiting non-determinism to explore all drawings of $G$ with $k$ crossings, where $0 \leq k \leq \binom{m}{2}$. Analogously to~\cite{GJ-cnNPc-83} a simple algorithm to test if a graph admits a fan-planar drawing with $k$ crossings non-deterministically considers all possible $k$ pairs of edges that cross (and the order in which crossings occur along edges involved in more than one crossing), discards the configurations where there is an edge that crosses more than one fan, replaces crossings with dummy vertices, and tests the obtained graph for planarity. \qed

\end{proof}

\section{Conclusions and Open Problems}\label{se:conclusions}

We extended the study of fan-planar drawings started by Kaufmann and Ueckerdt~\cite{DBLP:journals/corr/KaufmannU14}. We showed tight bounds on the density of constrained versions of fan-planar drawings and clarifed the relationship between fan-planarity and $k$-planarity. Also, we proved that the fan-planarity testing in the variable embedding setting is NP-complete. Several interesting problems remain open. We mention few of them, in addition to those already listed in~\cite{DBLP:journals/corr/KaufmannU14}:

\begin{itemize}
\item What is the minimum number of edges of maximal fan-planar graphs?
\item What is the complexity of deciding whether a graph with a given rotation system is fan-planar?
\item Can we efficiently recognize maximally dense fan-planar graphs? 
\end{itemize}

\section*{Acknowledgments}
This work started at the Bertinoro Workshop on Graph Drawing 2014.
We thank Michael Kaufmann and Torsten Ueckerdt for suggesting the study of fan-planar graphs during the workshop. 
We also thank all the participants of the workshop for the useful discussions on this topic.    

{\small \bibliography{fanplanar}}
\bibliographystyle{splncs03}

\end{document}